\newtheorem{definition}{Definition}
\newtheorem{lemma}{Lemma}
\newtheorem{theorem}{Theorem}
\newtheorem{remarkb}{Remark}
\newtheorem{proposition}{Proposition}
\newtheorem{example}{Example}
\newtheorem{invariant}{Invariant}
\newtheorem{datastructure}{Data Structure}
\newcommand{\N}{\mathbb{N}}
\newcommand{\lift}{\mathit{lift}}
\newcommand{\Lift}{\mathit{Lift}}
\newcommand{\CP}{\mathit{CPre}}
\newcommand{\Pre}{\mathit{Pre}}
\newcommand{\mina}{\mathit{min}}
\newcommand{\best}{\mathit{best}}
\newcommand{\+}{\mathit{+}}
\newcommand{\z}{{\bar{z}}}
\renewcommand{\S}{\mathbf{S}}
\newcommand{\GG}{\Gamma}
\newcommand{\E}{\ensuremath{\mathcal{E}}\xspace}
\renewcommand{\O}{\ensuremath{\mathcal{O}}\xspace}
\newcommand{\os}{symbolic one-step }
\newcommand{\trho}{\widetilde{\rho}}
\newcommand{\W}{\mathcal{W}}
\newcommand{\Out}{\mathit{Out}}
\newcommand{\In}{\mathit{In}}
\newcommand{\PG}{\mathcal{P}}
\newcommand{\win}{W}
\newcommand{\set}[1]{\{#1\}}
\newcommand{\numprio}{d\xspace}
\title{Quasipolynomial Set-Based Symbolic Algorithms for~Parity~Games}
\author{Krishnendu Chatterjee\inst{1}
	\and Wolfgang Dvo\v{r}\'{a}k\inst{2}
	\and Monika Henzinger\inst{3}
	\and Alexander Svozil\inst{3}\thanks{A. S. is fully supported by the Vienna
	Science and Technology Fund (WWTF) through project ICT15-003. K.C. is supported by the Austrian Science
Fund (FWF) NFN Grant No S11407-N23 (RiSE/SHiNE) and an ERC Starting
grant (279307: Graph Games). For M.H the research leading to these
results has received funding from the European Research Council under
the European Union’s Seventh Framework Programme (FP/2007-2013) / ERC
Grant Agreement no. 340506.}
}
\institute{
IST Austria,
\email{krish.chat@ist.ac.at}
\and
Institute of Logic and Computation, TU Wien,
\email{dvorak@dbai.tuwien.ac.at}
\and Faculty of Computer Science, University of Vienna,
\email{\{monika.henzinger,alexander.svozil\}@univie.ac.at}
}
\authorrunning{Chatterjee, Dvo\v{r}\'{a}k, Henzinger and Svozil}
\titlerunning{Quasipolynomial Set-Based Symbolic Algorithms for~Parity~Games}
\begin{document}

\maketitle

\begin{abstract}
Solving parity games, which are equivalent to modal $\mu$-calculus model checking,
is a central algorithmic problem in formal methods, with applications in
reactive synthesis, program repair, verification of branching-time
properties, etc. Besides the standard computation model with the explicit
representation of games, another important theoretical model of
computation is that of \emph{set-based symbolic algorithms}. Set-based
symbolic algorithms use basic set operations and one-step predecessor
operations on the implicit description of games, rather than the explicit
representation. The significance of symbolic algorithms is that they
provide scalable algorithms for large finite-state systems, as well as for
infinite-state systems with finite quotient. Consider parity games on
graphs with $n$ vertices and parity conditions with $\numprio$ priorities.
While there is a rich literature of explicit algorithms for parity games,
the main results for set-based symbolic algorithms are as follows: (a)~the
basic algorithm that requires $O(n^\numprio)$ symbolic operations and
$O(\numprio)$ symbolic space; and (b)~an improved algorithm that requires
$O(n^{\numprio/3+1})$ symbolic operations and $O(n)$ symbolic space.  In
this work, our contributions are as follows: (1)~We present a black-box
set-based symbolic algorithm based on the explicit progress measure
algorithm. Two important consequences of our algorithm are as follows: (a)~a
set-based symbolic algorithm for parity games that requires
quasi-polynomially many symbolic operations and $O(n)$ symbolic space; and
(b)~any future improvement in progress measure based explicit algorithms
immediately imply an efficiency improvement in our set-based symbolic
algorithm for parity games. (2)~We present a set-based symbolic algorithm
that requires quasi-polynomially many symbolic operations and $O(\numprio
\cdot \log n)$ symbolic space. Moreover, for the important special case of
$\numprio \leq \log n$, our algorithm requires only polynomially many
symbolic operations and poly-logarithmic symbolic space.
\end{abstract}

\pagebreak
\section{Introduction}\label{sec:intro}

In this work, we present new contributions related to algorithms for parity
games in the set-based symbolic model of computation. 

\smallskip\noindent{\em Parity games.} Games on graphs are central in many
applications in computer science, especially, in the formal analysis of reactive
systems. The vertices of the graph represent states of the system, the edges
represent transitions of the system, the infinite paths of the graph represent
traces of the system and the players represent the interacting agents.  The
reactive synthesis problem (Church's problem~\cite{Church62}) is equivalent to
constructing a {\em winning strategy} in a graph
game~\cite{BuchiL69,RamadgeW87,PnueliR89}. Besides reactive synthesis, the
game graph problem has been used in many other applications, such as
(1)~verification of branching-time properties~\cite{EmersonJ91},
(2)~verification of open systems~\cite{AlurHK02}, 
(3)~simulation and refinement between reactive systems~\cite{Milner71,HenzingerKR02,AHKV98};
(4)~compatibility checking~\cite{InterfaceTheories}, 
(5)~program repair~\cite{JobstmannGB05},
(6)~synthesis of programs~\cite{CernyCHRS11}; to
name a few. Game graphs with {\em parity winning} conditions are particularly
important since all $\omega$-regular winning conditions (such as safety,
reachability, liveness, fairness) as well as all Linear-time Temporal Logic
(LTL) winning conditions can be translated into parity
conditions~\cite{Safra88,Safra89}.  In a parity game, every vertex of the game
graph is assigned a non-negative integer priority from
$\set{0,1,\ldots,\numprio-1}$, and a play is winning if the highest priority
visited infinitely often is even. Game graphs with parity conditions can model
all the applications mentioned above, and are also equivalent to the modal
$\mu$-calculus~\cite{Kozen83} model-checking problem~\cite{EmersonJ91}.  Thus
the parity games problem is a core algorithmic problem in formal methods, and
has received wide attention over the
decades~\cite{EmersonJ91,BrowneCJLM97,Seidl96,Jurdzinski00,VogeJ00,JurdzinskiPZ08,Schewe07,calude2017stoc,lazic2017qp}.

\smallskip\noindent{\em Models of computation: Explicit and symbolic
algorithms.} For the algorithmic analysis of parity games, two models of
computation are relevant.  First, the standard model of {\em explicit}
algorithms, where the algorithms operate on the explicit representation of the
game graph.  Second, the model of {\em implicit or symbolic} algorithms, where
the algorithms do not explicitly access the game graph but operate with a set
of predefined operations.  For parity games, the most relevant class of symbolic
algorithms are called {\em set-based symbolic algorithms}, where the allowed
symbolic operations are: (a)~basic set operations such as union, intersection,
complement, and inclusion; and (b)~one step predecessor (Pre) operations
(see~\cite{ClarkeGP99,deAlfaroHM01,HMR05}).

\smallskip\noindent{\em Significance of set-based symbolic algorithms.} We
describe the two most significant aspects of set-based symbolic algorithms.
\begin{compactenum}
\item Consider large scale finite-state systems, e.g., hardware circuits,
or programs with many Boolean variables or bounded-domain integer variables.
While the underlying game graph is described implicitly (such as program code),
the explicit game graph representation is huge (e.g., exponential in the number
of variables).  The implicit representation and symbolic algorithms often do not
incur the exponential blow-up, that is inevitable for algorithms that require
the explicit representation of the game graph.  Data-structures such as Binary
Decision Diagrams (BDDs)~\cite{bryant1986graph} (with well-established tools
e.g. CuDD~\cite{CuDD}) support symbolic algorithms that are used in verification
tools such as NuSMV~\cite{Cimatti2000}.

\item In several domains of formal analysis of infinite-state systems,
such as games of hybrid automata or timed automata, the underlying state space
is infinite, but there is a finite quotient. 
Symbolic algorithms provide a practical and scalable approach for the analysis of such systems:
For many applications the winning set is characterized by $\mu$-calculus formulas with one-step predecessor 
operations which immediately give the desired set-based symbolic algorithms~\cite{deAlfaroHM01,deAlfaroFHMS03}. 
Thus, the set-based symbolic model of computation is an equally important theoretical model of computation to be
studied as the explicit model.
\end{compactenum}

\smallskip\noindent{\em Symbolic resources.} In the explicit model of
computation, the two important resources are time and space. Similarly, in the
symbolic model of computation, the two important resources are the number of
symbolic operations and the symbolic space. 
\begin{compactitem}
\item {\em Symbolic operations:} Since a symbolic algorithm uses a set of
predefined operations, instead of time complexity, the first efficiency measure
for a symbolic algorithm is the number of symbolic operations required.  Note
that basic set operations (that only involve variables of the current state) are less resource
intensive compared to the predecessor operations (that involve both variables of
the current and of the next state).  Thus, in our analysis, we will distinguish
between the number of basic set operations and the number of predecessor
operations. 

\item {\em Symbolic space:} We refer to the number of sets stored by a set-based symbolic algorithm as the
symbolic space for the following reason: A set in the symbolic model is considered to be unit space, for
example, a set that contains all vertices, or a set that contain all vertices
where the first variable is true, represent each $\Theta(n)$ vertices, but can be
represented as BDD of constant size.  While the size of a set and its symbolic
representation is notoriously hard to characterize (e.g., for BDDs it can
depend on the variable reordering), in the theoretical model of computation every
set is considered of unit symbolic space, and the symbolic space requirement is
thus the maximum number of sets required by a symbolic algorithm. 
\end{compactitem}
The goal is to find algorithms that minimize the symbolic space (ideally
poly-logarithmic) and the symbolic operations.

\smallskip\noindent{\em Previous results.}
We summarize the main previous results for parity games on graphs 
with $n$ vertices, $m$ edges, and $\numprio$ priorities. 
To be concise in the following discussion, 
we ignore denominators in~$\numprio$ in the bounds.
\begin{compactitem}

\item \emph{Explicit algorithms.} The classical algorithm for parity games
requires $O(n^{\numprio-1}\cdot m)$ time and linear
space~\cite{Zielonka98,McNaughton93}, which was then improved by the
small progress measure algorithm that requires $O(n^{\numprio/2}\cdot m)$ time
and $O(\numprio \cdot n)$ space~\cite{Jurdzinski00}.  Many improvements
have been achieved since then, such as the big-step algorithm~\cite{Schewe07},
the sub-exponential time algorithm~\cite{JurdzinskiPZ08}, an improved algorithm
for dense graphs~\cite{ChatterjeeHL15}, and the strategy-improvement
algorithm~\cite{VogeJ00}, but the most important breakthrough was achieved last year
where a quasi-polynomial time $O(n^{\lceil \log\numprio \rceil+6})$ algorithm
was obtained~\cite{calude2017stoc}.  While the original algorithm
of~\cite{calude2017stoc} required quasi-polynomial time and space, a succinct
small progress measure based algorithm~\cite{lazic2017qp} and value-iteration
based approach~\cite{FearnleyJS0W17} achieve the quasi-polynomial time bound
with quasi-linear space. However, all of the above algorithms are inherently
explicit algorithms.  

\item \emph{Set-based symbolic algorithms.} The basic set-based symbolic
algorithm (based on the direct evaluation of the nested fixed point of the
$\mu$-calculus formula) requires $O(n^\numprio)$ symbolic
operations and $O(\numprio)$ space~\cite{EmersonL86}.  In a breakthrough
result~\cite{BrowneCJLM97} presented a set-based symbolic algorithm that
requires $O(n^{\numprio/2+1})$ symbolic operations and $O(n^{\numprio/2+1})$
symbolic space (for a simplified exposition see~\cite{Seidl96}). In recent
work~\cite{chatterjee2017symbolic}, a new set-based symbolic algorithm was
presented that requires $O(n^{\numprio/3+1})$ symbolic operations and $O(n)$
symbolic space, where the symbolic space requirement is $O(n)$ even with a constant
number of priorities. 

\end{compactitem}

\smallskip\noindent{\em Open questions.}
Despite the wealth of results for parity games, many fundamental algorithmic
questions are still open. Besides the major and long-standing open question of
the existence of a polynomial-time algorithm for parity games, two important open
questions in relation to set-based symbolic algorithms are as follows:
\begin{compactitem} 
\item {\em Question~1.} Does there exist a set-based symbolic algorithm that
requires only quasi-polynomially many symbolic operations?

\item {\em Question~2.} Given the $O(\numprio)$ symbolic space requirement of
the basic algorithm, whereas all other algorithms require at least $O(n)$
space (even for a constant number of priorities) an important question is:
Does there exist a set-based symbolic algorithm that requires
$\widetilde{O}(\numprio)$ symbolic space (note that $\widetilde{O}$ hides poly-logarithmic factors), but beats the number of symbolic
operations of the basic algorithm?
This question is especially relevant since in
many applications the number of priorities is small, e.g., in
determinization of $\omega$-automata, the number of priorities is
logarithmic in the size of the automata~\cite{Safra88}.
\end{compactitem}

\begin{table}[!t]
	\centering
	\renewcommand{\arraystretch}{1.3}
	\caption{Set-Based Symbolic Algorithms for Parity
		Games.}\label{tab:comparison}
	\begin{tabular}{@{}lcc@{}}
		\toprule
		reference & symbolic operations & symbolic space\\
		\midrule
		\cite{EmersonL86,Zielonka98} & $O(n^\numprio)$ & $O(\numprio)$ \\
		\cite{BrowneCJLM97,Seidl96} & $O(n^{\numprio/2+1})$ &
		$O(n^{\numprio/2+1})$\\
		\cite{chatterjee2017symbolic} & $O(n^{\numprio/3+1})$ & $O(n)$ \\
		Thm.~\ref{thm:ordered_main},\! \ref{thm:reducedspace_main} & $n^{O(\log
		\numprio)}$ & $O(\numprio \log n)$\\
		\bottomrule
	\end{tabular}
\end{table}

\smallskip\noindent{\em Our contributions.} 
In this work, we not only answer the above open questions (Question~1 and
Question~2) in the affirmative but also show that both can be achieved by the same
algorithm:
\begin{compactitem}
\item First, we present a black-box set-based symbolic algorithm based on
explicit progress measure algorithm for parity games that use $O(n)$
symbolic space and $O(n^{O(\log \numprio)})$ symbolic operations.  
There are two important consequences of our algorithm: (a)~First, given the
ordered progress measure algorithm (which is an explicit algorithm), as a consequence of our
black-box algorithm, we obtain a set-based symbolic algorithm for parity
games that requires quasi-polynomially many symbolic operations and $O(n)$
symbolic space.  (b)~Second, any future improvement in progress measure
based explicit algorithm (such as polynomial-time progress measure
algorithm) would immediately imply the same improvement for set-based
symbolic algorithms.  Thus we answer Question~1 in affirmative and also
show that improvements in explicit progress measure algorithms carry over to
symbolic algorithms.

\item Second, we present a set-based symbolic algorithm that requires quasi-polynomially
many symbolic operations and $O(\numprio \cdot \log
n)=\widetilde{O}(\numprio)$ symbolic space.  Thus we not only answer
Question~2 in affirmative, we also match the number of
symbolic operations with the current best-known bounds for explicit
algorithms.  Moreover, for the important case of $\numprio \leq \log
n$, our algorithm requires polynomially many symbolic operations and
poly-logarithmic symbolic space.
\end{compactitem}
We compare our main results with previous set-based symbolic algorithms in Table~\ref{tab:comparison}.

\smallskip\noindent{\em Symbolic Implementations.}
Recently, symbolic algorithms for parity games received attention from a practical
perspective: First, three explicit algorithms (Zielonka's recursive algorithm,
Priority Promotion~\cite{BenerecettiDM16} and Fixpoint-Iteration~\cite{BFL14}) were converted to symbolic implementations~\cite{SWW18}. 
The symbolic solvers had a huge performance gain compared to the corresponding
explicit solvers on a number of practical instances. 
Second, four symbolic algorithms to solve parity games were compared to their
explicit versions (Zielonka's recursive algorithm, small progress measure and an automata-based algorithm~\cite{KV98,SMPV16})
\cite{SMV18}. For the symbolic versions of the small progress measure, two implementations were considered: (i) Symbolic
Small Progress Measure using Algebraic Decision Diagrams~\cite{BKV04} and (ii) the Set-Based Symbolic
Small Progress Measure~\cite{chatterjee2017symbolic}. The symbolic
algorithms were shown to perform better in several structured instances.

\smallskip\noindent{\em Other related works.} 
Besides the discussed theoretical results on parity games,
there are several practical approaches for parity games, such as, 
(a)~accelerated progress measure~\cite{deAlfaroF07}, 
(b)~quasi-dominion~\cite{BenerecettiDM16}, 
(c)~identifying winning cores~\cite{Vester16},
(d)~BDD-based approaches~\cite{KP12,KP14}, and 
(e)~an extensive comparison of various solvers~\cite{dijk2018oink}.
A straightforward symbolic implementation (not set-based) of small progress measure was
done in~\cite{BKV04} using \emph{Algebraic Decision Diagrams (ADDs)} and BDDs. 
Unfortunately, the running time is not comparable with our results
as using ADDs breaks the boundaries of the Set-Based Symbolic Model: ADDs
can be seen as BDDs which allow the use of a finite domain at the leaves~\cite{BFGHMPS1997}.
Recently, a novel approach for solving parity games in quasi-polynomial time
which uses the register-index was introduced~\cite{Lehtinen18}.
Moreover,~\cite{Lehtinen18} presents a $\mu$-calculus formula describing the
winning regions of the parity game with alternation depth based on the
register-index. The existence of such a $\mu$-calculus formula does not immediately imply
a quasi-polynomial set-based symbolic algorithm due to constructing the formula
using the register-index. 

\smallskip\noindent
Our work considers the theoretical model of symbolic computation 
and presents a black-box algorithm as well as a quasi-polynomial algorithm,
matching the best-known bounds of explicit algorithms. 
Thus our work makes a significant contribution towards the theoretical understanding of symbolic computation for parity games.

\section{Preliminaries}

We follow a similar notation as~\cite{chatterjee2017symbolic}.

\subsection{Basic Definitions}

\noindent\emph{Game Graphs.}
A game graph is a graph $\GG = (V,E, \langle V_\E, V_\O \rangle)$ where the 
vertices $V$ are partitioned into player-\E vertices $V_\E$ and 
player-\O vertices $V_\O$, i.e., $V = V_\E \cup V_\O$. 
Let $\Out(v)$ describe the set of successor vertices of $v$. The set
$\In(v)$ describes the set of predecessors of the vertex $v$.
More formally $\Out(v) = \{ w \in V \mid (v,w) \in E \}$ and $\In(v) = \{w \in V 
	\mid (w,v) \in E \}$. We assume without loss of generality that every vertex has
an outgoing edge. We denote the number of vertices with $n$ and the number of
edges with $m$.

\smallskip\noindent\emph{Plays.}
Let $\GG = (V,E, \langle V_\E, V_\O \rangle)$ be a game graph.
Initially, a token is placed on a vertex $v_0 \in V$. When $v \in V_z$ for $z
\in \{ \E, \O \}$, player $z$ moves the token along one of the edges to a vertex in $\Out(v)$. Formally, a
\emph{play} is an infinite sequence $\langle v_0,v_1,v_2,v_3,\dots,\rangle$ where for every $i\geq 0$ the
following condition holds: $(v_i,v_{i+1}) \in E$. 

\smallskip\noindent\emph{Parity Game.}
A \emph{parity game} $\PG$ with $\numprio$ priorities is a game graph $\Gamma$ with a
function $\alpha$ that assigns each vertex a priority, i.e., $\alpha: V \mapsto
\{0,1,2,\dots \numprio-1\}$ where $\numprio \in \N$ and $\numprio > 0$ and $\PG = (\Gamma, \alpha)$.
The set $C$ is the set of all priorities. Let $\rho$ be a play of $\PG$. Player \E
\emph{wins} $\rho$ if the highest priority occurring
infinitely often is even. Player \O, on the other hand, wins $\rho$ if
the highest priority occurring infinitely often is odd.
Let $V_i$ for $0 \leq i \leq \numprio-1$ denote the vertices in $\PG$ with priority $i$. 
Formally, we define $V_i = \{ v \in V \mid \alpha(v) = i \}$.

\smallskip\noindent\emph{Strategies.}
A \emph{strategy} for player $z \in \{\E,\O\}$ is a function that extends a finite
prefix of a play which ends at vertex $v \in V_z$ by appending a vertex $v' \in
\Out(v)$. A \emph{memoryless strategy} is a strategy that depends only on the
last vertex of a play. This corresponds to a function $\sigma_z: V_z \mapsto V$
such that $\sigma_z(v) \in \Out(v)$ for all $v \in V_z$. 
The results from~\cite{EmersonJ91,McNaughton93} show that it is sufficient to consider memoryless strategies
for parity games. We shall therefore from now on only consider memoryless
strategies. A starting vertex $s \in V$, a player-\E strategy $\sigma$, and a player-\O strategy $\pi$
describe a unique play $\omega(s,\sigma,\pi) = \langle v_0, v_1, v_2 \dots
\rangle$ in a game graph. It is defined as follows: $v_0 = s$ and for all $i\geq
0$, if $v_i \in V_\E$ then $v_{i+1} = \sigma(v_i)$ and if $v_i \in V_\O$ then
$v_{i+1} = \pi(v_i)$. 

\smallskip\noindent\emph{Winning Strategies and Winning Sets.}
A strategy $\sigma$ is \emph{winning} at a start vertex $s \in V$ 
for player \E iff for all strategies $\pi$ of player \O, player \E wins
the play $\omega(s,\sigma,\pi)$. If there exists a winning strategy at a start vertex $s \in V$ 
for player $z \in \{\O,\E\}$, $s$ is part of the winning set of player $z$, $\win_z$.
Every vertex is winning for exactly one of the players~\cite{EmersonJ91,MOS91}.
In this work, we study the problem of computing the winning sets for the two
players.

\subsection{Symbolic Model of Computation}

In the set-based symbolic model, the game graph is not accessed explicitly but with
set-based symbolic operations. The resources in the
symbolic model of computation are characterized by the number of set-based
symbolic operations and the set-based space. 

\smallskip\noindent\emph{Set-Based Symbolic Operations.}
A set-based symbolic algorithm is allowed to use the same mathematical, logical and memory
access operations as a regular RAM algorithm, except for the access to the input
graph.
Given an input game graph $\GG = (V, E, \langle V_\E, V_\O \rangle )$ 
and a set of vertices $S \subseteq V$, the game graph
$G$ can be accessed only by the following two types of operations:
\begin{enumerate}
	\item The \emph{basic set operation}: $\cup,\cap,\subseteq,\setminus$ and
		$=$.
	\item The \emph{one-step operation} to obtain the predecessors of
		the vertices of $S$ in $G$. In particular, we define the predecessor operation
		\begin{equation*}
			\Pre(S) = \{ v \in V \mid \Out(v) \cap S \neq \emptyset \}.
		\end{equation*}
		
\end{enumerate}
Let $z \in \{\E,\O\}$, then $\bar{z}=\O$ if $z=\E$ and $\bar{z}=\E$ if $z=\O$.
The \emph{controllable} predecessor operation for $z \in \{\E,\O\}$ is defined
as
\begin{equation*}
\CP_z(S) = \{ v \in V_z \mid \Out(v) \cap S \neq \emptyset \} \cup \{ v \in
	V_{\bar{z}} \mid \Out(v) \subseteq S \}.
\end{equation*}
The set $\CP_z(S)$ can be expressed using only $\Pre$ and
basic set operations.
Note that basic set operations (that only involve variables of the current state) are much cheaper
compared to the one-step operations (that involve both variables of
the current and of the next state).  Thus, in our analysis, we will distinguish
between the number of basic set operations and the number of one-step operations. 
Notice that one can define a one-step \emph{successor} operation (denoted $\mathit{Post}(\cdot)$) as well~\cite{ChatterjeeDHL18},
but for the algorithms presented in this work the given predecessor operation suffices.

\smallskip\noindent\emph{Set-based Symbolic Space.}
The basic unit of space for a set-based symbolic algorithm for game graphs are
sets~\cite{BrowneCJLM97,chatterjee2017symbolic}.  For example, a set can
be represented symbolically as one
BDD~\cite{bryant1986graph,bry92,BurchCMDH90,ClarkeMCH96,Somenzi99,ClarkeGP99,ClarkeGJLV03,GentiliniPP08,ChatterjeeHJS13}
and each such set is considered as unit space.  Consider for example a game
graph whose state-space consists of valuations of $N$-Boolean variables. The
set of all vertices is simply represented as a true BDD. Similarly, the set of all
vertices where the $k$th bit is false is represented by a BDD which depending
on the value of the $k$th bit chooses true or false. Again, this set can be
represented as a constant size BDD. 
Thus, even large sets can sometimes be represented as constant-size BDDs.
In general, the size of the smallest BDD representing a set is notoriously hard to
determine and depends on the variable reordering~\cite{ClarkeGP99}.
To obtain a clean theoretical model for the algorithmic analysis,
each set is represented as a unit data structure and
requires unit space.  Thus, for the \emph{space requirements} of a symbolic
algorithm, we count the maximal number of sets the algorithm stores
simultaneously and denote it as the {\em symbolic space}.

\subsection{The Progress Measure Algorithm}\label{ss:pm}

\emph{High-level intuition.} Let $\PG = (V,E, \langle V_\E,
V_\O \rangle, \alpha)$ be a parity game and let $(\W, \prec)$ be a finite total order
with a maximal element $\top$ and a minimal element $\min$. A \emph{ranking function} is a function
$f$ which maps every vertex in $V$ to a value in $\W$. The value of a
vertex $v$ with respect to the ranking function $f$ is called \emph{rank} of
$v$. The rank $f(v)$ of a vertex $v$ determines how ``close'' the vertex is to
being in $W_z$, the winning set of a fixed player $z$.
Initially, the rank of every vertex is the minimal value of $\W$. The
\emph{progress measure algorithm} iteratively
increases the rank of a vertex $v$ with an operator called $\Lift$ with
respect to the successors of $v$ and another function called $\lift$. The algorithm
terminates when the rank of no vertex can be increased any further, i.e., the
least fixed point of $\Lift$ is reached. We call the least simultaneous fixed
point of all $\Lift$-operators \emph{progress measure}. 
When the rank of a vertex is the maximal element of the total order it is
declared winning for player $z$. The rest of the vertices are declared winning for the adversarial player
$\z$. 

\smallskip\noindent\emph{Ranking Function.}
Let $\W$ be a total order with a minimal element $\mina$ and maximal element
$\top$. A ranking function is a function ${f: V \mapsto \W}$.

\smallskip\noindent\emph{The $\best$ function.} The \emph{$\best$} function
represents the ability of player $z$, given the token is at vertex $v$, 
to choose the vertex in $\Out(v)$ with the
maximal ranking function. Analogously, it constitutes the ability of
player $\z$, given the token is at vertex $v$, to choose the vertex in $\Out(v)$ with the
minimal ranking function. Formally, the function $\best$ is defined for a vertex $v$ and a
ranking function $f$ as follows:
\begin{equation*}
\best(f,v)  = \begin{cases}
	\min \{ f(w) \mid w \in \Out (v) \} & \text{if } v \in V_{\z}\\
	\max \{ f(w) \mid w \in \Out(v) \} & \text{if } v \in V_z
\end{cases}
\end{equation*}

\smallskip\noindent\emph{The $\lift$-function.}
The function $\lift: \W \times C \mapsto \W$ defines how the rank of
a vertex $v$ is increased according to the rank $r$ of a successor vertex,
and the priority $\alpha(v)$ of $v$. The $\lift$ function 
needs to be monotonic in the first argument. Notice that we do not need
information about the graph to compute the $\lift$ function. In all known
progress measures, the $\lift$ function is computable in constant time.

\smallskip\noindent\emph{The $\Lift$-operation.}
The $\Lift$-operation potentially increases the rank of a vertex $v$ according to
its priority $\alpha(v)$ and the rank of all its successors in the graph.\footnote{Notice that in the original definition~\cite{Jurdzinski00} the $\lift$ function is applied to all successors and the best of them is chosen subsequently.
As the $\lift$ is monotone in the first argument the two definitions are equivalent.}
\begin{equation*}
\Lift(f,v)(u) = \begin{cases}
	\lift(\best(f,v), \alpha(v)) & \text{if $u = v$}\\
	f(u)					  & \text{otherwise}
\end{cases}
\end{equation*}
A ranking function is
a \emph{progress measure} if it is the least simultaneous fixed point of all
$\Lift(\cdot,v)$-operators.

\smallskip\noindent\emph{The Progress Measure Algorithm.}
The progress measure algorithm initializes the ranking function $f$ with the
minimum element of $\W$. Then, the $\Lift(\cdot,v)$-operator is computed in an
arbitrary order regarding the vertices. The winning set of player z can be obtained from a progress
measure by selecting those vertices whose rank is $\top$.
Notice that we need to define the total order $(\W,\prec)$ and a function
$\lift$ to initialize the algorithm. 

For example, the following instantiations of the progress measure algorithm
determine the winning set of a parity game:
(i) Small Progress Measure~\cite{Jurdzinski00}, (ii) Succinct Progress
Measure~\cite{lazic2017qp} and the (iii) The Ordered
Approach~\cite{FearnleyJS0W17}. 
The running time is dominated by the size of $\W$.
For a discussion on the state-of-the-art size of $\W$ we refer
the reader to Remark~\ref{rem:bound}. 

\section{Set-Based Symbolic Black Box Progress Measure Algorithm}\label{sec:bb3}

In this section, we briefly present a basic version of a set-based symbolic progress measure
algorithm. The key idea is to compute the Lift-operation with a
set-based symbolic algorithm. Then, we improve the basic 
version to obtain our black box set-based symbolic progress measure algorithm.
Finally, we prove its correctness and analyze the symbolic resources.

\smallskip\noindent\emph{A basic Black Box Algorithm.}
Throughout the algorithm, we maintain the family $\mathbf{S}$ of sets of vertices, which contains a set 
for every element in $\W$, i.e.,  $\S = \{ S_r \mid r \in \W\}$. 
Intuitively, a vertex $v \in S_r$ has rank $f(v) = r$.
Initially, we put each vertex into the set with the minimal value of $\W$,
i.e., $S_\mina$. In each iteration, we consider all non-empty sets $S_r \in \S$: 
The algorithm checks if the ranking function of the predecessors of the vertices
in $S_r$ must be increased, i.e. $\Lift(f,v)(v) \succ f(v)$ where $v \in \Pre(S_r)$, 
and if so, performs the $\Lift$-operator for the predecessors.
We repeat this step until the algorithm arrives at a fixed point.

\smallskip\noindent\emph{Performing a Lift operation.} 
To compute the $\Lift$-operation in the set-based symbolic setting we need to
compute two functions for the predecessors of $S_r$: (1)~the $\lift$-function and (2)~the $\best$-function.
By definition, the $\lift$-function does not access the game graph or vertices thereof.
Thus we can compute the $\lift$-function without the use of symbolic operations. 
To compute the $\best$-function we need access to the game graph.
In turns out it is simpler to compute the vertices with $\best(f,v) \succeq r$
rather than the vertices with $\best(f,v) = r$. 
Thus, we lift all vertices $v$ with $\best(f,v) \succeq r$ to the rank $\lift(r,\alpha(v))$. 
To this end, we first compute the set $S_{\succeq r} = \bigcup_{l \succeq r} S_l$ of vertices with rank $\succeq r$.
Then, we compute $P = \CP_z(S_{\succeq r})$ and, hence, the set $P$ comprises the vertices $v \in P$ with
$\best(f,v) \succeq r$. Finally, to compute $\Lift(f,v)$, for each $c \in
C$, we consider the vertices of $P$ with priority $c$, i.e., the set  $(P \cap V_c)$, and add  them to the set $S_{\lift(r,c)}$. 
Notice that we lift each vertex $v$ to $\lift(r,\alpha(v))$
where $r = \best(f,v)$ as we consider all non-empty sets $S_r \in \S$. No vertex
$v$ will be lifted to a set higher than $\lift(r,\alpha(v))$ where  $r = \best(f,v)$ due to the monotonicity
of the lift function in the first argument.
If after an iteration of the algorithm a vertex appears in several sets of $\mathbf{S}$ 
we only keep it in the set corresponding to the largest rank and remove it from all the other sets.

\subsection{Improving the Basic Algorithm}\label{ss:algorithm}

In this section, we improve the basic Algorithm by (a) reducing the symbolic
space from $O(|\mathcal{W}|)$ to $O(n)$ 
and (b) by reducing the number of symbolic operations required to compute the fixed point.

\smallskip\noindent\emph{Key Idea.}
The naive algorithm considers each non-empty set $S_r$ in iteration
$i\+1$ again no matter if $S_r$ has been changed in iteration $i$ or not.
Notice that we only need to consider the predecessors of the set $S_r$ again when
the set $S_{\succeq r}$ in iteration $i\+1$ contains 
additional vertices compared to the set $S_{\succeq r}$ in iteration $i$.
To overcome this weakness, we propose Algorithm~\ref{alg:blackbox3}. In this
algorithm, we introduce a data structure called $D$. In the data structure $D$ we keep track
of the sets $S_{\succeq r}$ instead of the sets $S_{r}$. The set $S_{\succeq r}$
contains all vertices with a rank greater or equal than $r$. Furthermore, we separately keep track of the elements $r \in \W$ where the set
$S_{\succeq r}$ changed since the last time $S_{\succeq r}$ was selected to be processed. 
These elements of $\W$ are called \emph{active}.
Moreover, if we have two sets $S_{\succeq r} = S_{\succeq r'}$ with $r \prec r'$
there is no need to process the set $S_{\succeq r}$
because $\lift(r',c) \succeq \lift(r,c)$ holds due to the monotonicity of the
$\lift$ function. To summarize, we precisely store a set $S_{\succeq r}$ if there is no $r'$
with $r \prec r'$ and $S_{\succeq r} = S_{\succeq r'}$.
Notice, that this instantly gives us a bound on the symbolic space of
$O(n)$. 

\smallskip\noindent\emph{Algorithm Description.} 
In Algorithm~\ref{alg:blackbox3} we use the data structure $D$ to manage the active $r \in \W$ and in each iteration of 
the outer while-loop we process the corresponding set $S_{\succeq r}$ of such an
$r \in \W$. 
We first compute $P = \CP_z(S_{\succeq r})$, then,
for each $c \in C$  we compute $r' = \lift(r,c)$
and update the set $S_{\succeq r'}$ by adding $P \cap V_c$. 
The inner while-loop ensures that $P \cap V_c$ is also added to all the sets $S_{\succeq r'}$ with $r' \prec r$ and 
the properties of the data structure are maintained, i.e.,
(a) all active elements are in the active list, and
(b) exactly those $r \in \W$ with $S_{\succeq r} \supset S_{\succeq r'}$, for $r
\prec r'$ are stored in $D$.

\begin{algorithm}[t]
	\SetKwInOut{Input}{input}
	\SetKwInOut{Output}{output}
	\SetAlgoVlined
	\Input{Parity Game $\PG$}
	\caption{Black Box Set-Based Symbolic Progress Measure}\label{alg:blackbox3}
	Initialize data structure $D$\; \label{alg:bb3:dinit}
	D.activate$(min)$\;\label{alg:bb3:activeinit}
	\While{$r \gets D.popActiveSet()$\label{alg:bb3:while1}}{\label{alg:bb3:deactivate}
		$S_{\succeq r}  \gets D.getSet(r)$\;
		$P \gets \CP_z(S_{\succeq r})$\;\label{alg:bb3:cp}
		\For{$c \in C$} {\label{alg:bb3:for1}
			$r' \gets \lift(r,c)$\;\label{alg:bb3:lift}
			$S_{\succeq r'} \gets D.getSet(r')$\;
			\While{$P \cap V_c \not\subseteq S_{\succeq r'}\label{alg:bb3:while2}$}
			{
				$S_{\succeq r'} \gets S_{\succeq r'} \cup (P \cap V_c)$\label{alg:bb3:liftvertices}\;
				$S_{\succeq next(r')} \gets D.getSet(D.getNext(r'))$\;
				\If{$r' = \top$ or $S_{\succeq r'} \supset S_{\succeq next(r')}$\label{alg:bb3:if1}}
				{
					\tcp{$S_{\succeq r'}$ is a super set of $S_{\succeq
							next(r')}$ and thus we save it}
					$D.update(r', S_{\succeq r'})$;\label{alg:bb3:store}
					$D.activate(r')$\;\label{alg:bb3:activate}
				}				
				$S_{\succeq prev(r')} \gets D.getSet(D.getPrevious(r'))$\;
				\If{$S_{\succeq r'} = S_{\succeq prev(r')} \cup (P \cap V_c)$}{\label{alg:bb3:else1}
					\tcp{We only keep sets which are different}
					$D.removeSet(D.getPrevious(r'))$\;\label{alg:bb3:removed}
				}
				$r'\gets D.getPrevious(r')$\;\label{alg:bb3:while2end}
				$S_{\succeq r'}  \gets
				D.getSet(r')\;$\;\label{alg:bb3:while2end2}
			}    
		}
	}
	\Return $S_\top$
\end{algorithm}

\smallskip\noindent\emph{Active Elements.} Intuitively, an element $r \in \W$ is \emph{active}
if $S_{\succeq r} \supset S_{\succeq r'}$, for all $r'$ where $r \prec r'$ 
and $S_{\succeq r}$ has been changed since the last time 
$S_{\succeq r}$ was selected at Line~\ref{alg:bb3:while1} of Algorithm~\ref{alg:blackbox3}.
We define active elements more formally later.

\begin{datastructure}\label{alg:bb3:ds}
Our algorithm relies on a data structure $D$, which supports the following operations:
\begin{compactitem}
\item $D.popActiveSet()$ returns an element $r \in \W$ marked as active and makes it
	inactive. If all elements are inactive, returns false.

      \item $D.getSet(r)$ returns the set $S_{\succeq r}$. 		  
      
      \item $D.getNext(r)$ returns the smallest $r'$ with $S_{\succeq r} \supset  S_{\succeq r'}$.
      
      \item $D.getPrevious(r)$ returns the largest $r'$ where $S_{\succeq r} \subset S_{\succeq r'}$.
    
      \item $D.removeSet(r)$ 
			     marks $r$ as inactive. 

      \item $D.activate(r)$ marks $r$ as active.
      
      \item $D.update(r,S)$  updates the set $S_{\succeq r}$ to $S$, i.e., $D.getSet(r)$ returns $S$.
							 Moreover, all sets $S_{\succeq r'}$ with $r' \prec r$ and $S_{\succeq r'}=S_{\succeq r}$ beforehand
							 are updated to $S$ as well.
  \end{compactitem}
 We initialize $D$ with $D.update(\mina,V)$ and $D.update(\top, \emptyset)$.
\end{datastructure}

We can define \emph{active} elements formally now as the definition depends on
$D$.
\begin{definition}
	Let $S^0_{\succeq r}=D.getSet(r)$ be the set stored in $D$ for  $S_{\succeq r}$ 
	after the initialization of $D$, 
	and let $S^i_{\succeq r}=D.getSet(r)$ be the set
	stored in $D$ for $S_{\succeq r}$ after the $i$-th iteration of the
	while-loop at Line~\ref{alg:bb3:while1}. 
         An element $r \in \W$ is \emph{active} after the $i$-th iteration
	of the while-loop 
	if 
	(i) for all $r' \in \W$ where $r \prec r'$ we have 
	 $S^i_{\succeq r} \supset S^i_{\succeq r'}$ and
	(ii) there is a $j < i$ such that $S^i_{\succeq r} \supset S^j_{\succeq r}$ 
	and for all $j < j' \leq i$ the set $S_{\succeq r}$ 
	is not selected in Line~\ref{alg:bb3:while1} in the $j'$-th iteration.
	Additionally, we consider $r=\min$ as active before the first iteration. An
	element $r \in \W$ is \emph{inactive} if it is not active. 
\end{definition}

Notice that, in Algorithm~\ref{alg:blackbox3} an $r \in \W$ is active iff $r$ is marked as active in $D$. 
The algorithm ensures this in a very direct way.  
At the beginning only $\min \in W$ is active, which is also marked as active in $D$
in the initial phase of the algorithm.
Whenever some vertices are added to a set $S_{\succeq r}$, it
is tested whether $S_{\succeq r}$ is larger than its successor and if so $r$
is activated (Lines~\ref{alg:bb3:if1}-\ref{alg:bb3:activate}).  
On the other hand, if something is added to the successor of
$S_{\succeq r}$ in the data structure $D$ then the algorithm tests whether the two sets are equal and
if so $r$ is rendered inactive (Lines~\ref{alg:bb3:else1}-\ref{alg:bb3:removed}). 

\smallskip\noindent\emph{Implementation of the data structure $D$.}
The data structure uses an AVL-tree and 
a doubly linked list called ``active list'' that keeps track of the active elements.
The nodes of the tree contain a pointer to the corresponding set $S_{\succeq r}$
and to the corresponding element in the active list.

\begin{compactitem}
	\item Initialization of the data structure $D$: Create the AVL tree with the elements $min$ and $\top$.
	      The former points to the set of all vertices and the latter to the empty set.
	      Create the doubly linked list called ``active list'' as an empty list.
	      
	\item $D.popActiveSet()$: Return the first element from the active list and
		remove it from the active list. If the list is empty, return false.
		
	\item $D.getSet(r)$:  Searches the AVL tree for $r$ or for the next greater
		element (w.r.t. $\succeq$). Then we return the set by using the pointer
		we stored at the node. 
	\item $D.getNext(r)$: 
	        First performs $D.getSet(r)$ and then computes the inorder successor in the 
	        AVL-tree. This corresponds to the next greater node w.r.t. $\succeq$.
	        	
	\item $D.getPrevious(r)$: 
	        First performs $D.getSet(r)$ and then computes the inorder predecessor in the 
	        AVL-tree. This corresponds to the next smaller node w.r.t. $\succeq$.
		
	\item $D.removeSet(r)$: This operation needs the element $r$ to be stored in the
		AVL tree. Search the AVL tree for $r$. Remove the corresponding element from
		the active list and the AVL Tree. 
		
	\item $D.activate(r)$: This operation needs the element $r$ to be stored in the
		AVL tree.
		Add $r$ to the active list and add pointers to the AVL-tree.
		The element in the active list contains a pointer to the tree element and vice versa.
	
	\item $D.update(r,S)$ :        
	        Perform $S_{\succeq r} \gets D.getSet(r)$: 
	        If $r$ is contained in the AVL tree then update $S_{\succeq r}$ to $S$.
	        Otherwise, insert $r$ as a new element and let the element point to $S$.		
\end{compactitem}

We initialize the data structure $D$ with $min \in \W$ and
$\top \in \W$. Thus, whenever we query $D$ for a value $r \in \W$ we find
it or there exists an $r' \succ r$ which is in $D$.

\smallskip\noindent\emph{Analysis of the data structure $D$.}
The data structure can be implemented with an AVL-tree and 
a doubly linked list called ``active list'' that keeps track of the active elements such that
all of the operations can be performed in $O(\log n)$:
when the algorithm computes $D.update(r,S)$ we store $r$ and a pointer
to the set $S$ as a node in the AVL tree.
By construction, the algorithm only stores pointers to different sets and when we additionally preserve anti-monotonicity among the sets 
we only store $\leq n$ sets. Therefore, the AVL tree has only $\leq n$
nodes with pointers to the corresponding sets and searching for a set with
the operation $D.getSet(r)$ only adds a factor of $\log n$ to the non-symbolic operations
when we store $r$ as key with a pointer to $S_{\succeq r}$ in the AVL-tree.
Moreover, we maintain pointers between the elements of the active list and the corresponding vertices in the AVL tree.

\begin{remarkb}\label{remark1}
	The described algorithm is based on a data structure $D$ 
	which keeps track of the sets that will be processed at some point later in time. 
	Note that this data structure
	does not access the game graph but only stores pointers to sets that the
	Algorithm~\ref{alg:blackbox3} maintains. 
	The size of the AVL tree implementing $D$ is proportional to the symbolic space of the algorithm.
\end{remarkb}

\subsubsection{Correctness}\label{ss:correctness}
In order to prove the correctness of Algorithm~\ref{alg:blackbox3} we tacitly assume that the algorithm terminates.
An upper bound on the running time is then shown in Proposition~\ref{prop:numberactivations}.

\begin{proposition}[Correctness.]\label{lem:bb3:correctness}
	Let $\PG$ be a parity game.
	Given a finite total order $(\W,\prec)$ with minimum element $\mina$, a	maximum element $\top$ 
	and a monotonic function $\lift:\W \times C \mapsto \W$ 
	Algorithm~\ref{alg:blackbox3} computes the least simultaneous fixed point of
	all $\Lift(\cdot,v)$-operators.	
\end{proposition}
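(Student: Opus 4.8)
The plan is to connect the state of the data structure $D$ to an explicit ranking function and then argue soundness and minimality separately. First I would read off from $D$ the ranking function $f(v)=\max\{\, r\in\W : v\in S_{\succeq r}\,\}$, so that $v\in S_{\succeq r}$ holds exactly when $f(v)\succeq r$. I would then establish a collection of loop invariants describing the intended behaviour of $D$: that the stored family is anti-monotone ($S_{\succeq r}\supseteq S_{\succeq r'}$ whenever $r\prec r'$); that $D$ keeps a set for $r$ precisely when it differs from the set of the next larger stored element, so that at most $n$ distinct sets are retained (this is what underlies the space accounting of Remark~\ref{remark1}); and that an element is marked active in $D$ if and only if it is \emph{active} in the sense of the formal definition. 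These invariants are preserved by the bookkeeping in Lines~\ref{alg:bb3:if1}--\ref{alg:bb3:removed} and by $D.update$, which by its specification also updates every smaller $r'$ whose set previously coincided with $S_{\succeq r}$.

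Next I would prove a monotonicity invariant: every set $S_{\succeq r}$ only grows, hence $f$ only increases, and every addition in Line~\ref{alg:bb3:liftvertices} is a legal lift, since vertices of priority $c$ are placed into $S_{\succeq\lift(r,c)}$ only after being identified in $P=\CP_z(S_{\succeq r})$, which (as argued in the basic-algorithm description) is exactly the set of $v$ with $\best(f,v)\succeq r$. For soundness at termination I would show that once the outer while-loop exits---i.e. no element is active---the function $f$ is a simultaneous fixed point of all $\Lift(\cdot,v)$. Suppose not; then some vertex $v$ has $\lift(\best(f,v),\alpha(v))\succ f(v)$. Writing $r=\best(f,v)$ and $c=\alpha(v)$, the vertex $v$ lies in $\CP_z(S_{\succeq r})$ but not in $S_{\succeq\lift(r,c)}$. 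Let $r^\circ$ be the representative (largest element with set value $S_{\succeq r}$); active elements are always representatives by condition~(i) of the definition. If $r^\circ$ had been processed at its current value, the inner while-loop (Lines~\ref{alg:bb3:while2}--\ref{alg:bb3:while2end2}) would have added $v$ to $S_{\succeq\lift(r,c)}$, a contradiction; otherwise $S_{\succeq r^\circ}$ grew since its last processing, so $r^\circ$ is active by condition~(ii), contradicting termination.

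For minimality I would compare $f$ with the progress measure $f^\ast$ (the least simultaneous fixed point) and prove, by induction on the iterations of the outer loop, the invariant $f(v)\preceq f^\ast(v)$ for all $v$. Initially $f$ assigns $\mina$ everywhere, so the invariant holds. The only changes to $f$ are lifts $v\mapsto\lift(\best(f,v),\alpha(v))$ for $v\in\CP_z(S_{\succeq r})$. Since $\best$ is monotone in its ranking-function argument and $\lift$ is monotone in its first argument, the hypothesis $f\preceq f^\ast$ gives
\[
\lift(\best(f,v),\alpha(v))\;\preceq\;\lift(\best(f^\ast,v),\alpha(v))\;=\;f^\ast(v),
\]
where the equality is the fixed-point property of $f^\ast$. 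Hence the invariant is preserved. Combining this with soundness: at termination $f$ is a fixed point with $f\preceq f^\ast$, while minimality of $f^\ast$ gives $f^\ast\preceq f$; therefore $f=f^\ast$, which is the desired conclusion.

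I expect the main obstacle to be the soundness argument, specifically formalising the correspondence between the active-element bookkeeping of $D$ and the fixed-point condition. The difficulty is that $D$ deliberately compresses runs of equal sets $S_{\succeq r}=S_{\succeq r'}$ into a single stored representative, and the inner while-loop propagates each addition downward to all smaller $r'$ until the set already contains $P\cap V_c$; one must verify that this compression never silently drops a needed lift, and that the activation test in Lines~\ref{alg:bb3:if1}--\ref{alg:bb3:activate} together with the deactivation test in Lines~\ref{alg:bb3:else1}--\ref{alg:bb3:removed} correctly maintain the equivalence ``$r$ active in $D$ iff $r$ active by definition.'' Once this correspondence is pinned down, both the soundness contradiction and the $O(n)$ space bound follow, and the remaining monotonicity and minimality steps are routine given the monotonicity of $\best$ and $\lift$.
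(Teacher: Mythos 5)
Your proposal is correct and follows essentially the same route as the paper: it reads off the ranking function $\rho(v)=\max\{r\in\W\mid v\in S_{\succeq r}\}$ from the sets, maintains the anti-monotonicity invariant and the lower-bound invariant $\rho\preceq\trho$ via monotonicity of $\lift$ and $\best$, and closes with the ``inactive representatives have all their lifts applied'' property to get a fixed point at termination. The only cosmetic difference is that the paper packages this last property as a loop invariant (Invariant~\ref{inv:correctness}(3), ``$r$ is active or the lift condition holds''), whereas you derive it by contradiction at termination --- but the case analysis you need (set unchanged since last processed versus set grown, hence active) is exactly the content of that invariant.
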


To prove the correctness of Algorithm~\ref{alg:blackbox3}, we prove that when
Algorithm~\ref{alg:blackbox3} terminates, the function $\rho(v) =
\max\{r \in \W \mid v \in S_{\succeq r}\}$ is equal to the least simultaneous fixed point of
all $\Lift(\cdot,v)$-operators. We show that when the properties described in
Invariant~\ref{inv:correctness} hold, the function $\rho$ is equal to the 
least fixed point at the termination of the algorithm. Then, we prove that we maintain the properties of Invariant~\ref{inv:correctness}.

\begin{invariant}\label{inv:correctness}
	Let $\trho$ be the least simultaneous fixed point of
	$\Lift(\cdot,v)$ and $\rho(v) = \max\{r \in \W \mid v \in S_{\succeq r}\}$
	be the ranking function w.r.t.\ the sets $S_{\succeq r}$ that are
	maintained by the algorithm.
	\begin{compactenum}
		\item Before each iteration of the while-loop at
			Line~\ref{alg:bb3:while1} we have $S_{\succeq r_2} \subseteq
			S_{\succeq r_1}$ for all $r_1 \preceq r_2$ (anti-monotonicity).
		\item Throughout Algorithm~\ref{alg:blackbox3} we have $\trho(v) \succeq
			\rho(v)$ for all $v \in V$.
		\item For all $r \in \W$: (a) $r$ is active or (b) for all $v \in \CP_z(S_{\succeq r})\!:\\ 
					\text{ if } \best(\rho,v) = r$, then	$\rho(v) = \Lift(\rho,v)(v)$.
	\end{compactenum}
\end{invariant}

In the following paragraph, we describe the intuition of
Invariant~\ref{inv:correctness}. Then,
we show that the properties of Invariant~\ref{inv:correctness} are sufficient to obtain the correctness of
Algorithm~\ref{alg:blackbox3}. Finally, we prove that each property holds
during the while-loop at Line~\ref{alg:bb3:while1}.

\smallskip\noindent\emph{Intuitive Description.} The intuitive description is as follows:
\begin{compactenum}
	\item Ensures that the sets $S_{\succeq r}$ contain the correct elements. Having the
		sets $S_{\succeq r}$ allows computing $\best(f,v) \succeq r$ as discussed at the
		beginning of the section.
	\item Guarantees that $\rho$ is a lower
		bound on $\tilde{\rho}$ throughout the algorithm.
	\item When an $r \in \W$ is not active, 
		the rank of no vertex can be increased by applying $\lift$ to the
		vertices which have $\best(\rho,v) = r$.
\end{compactenum}
When the algorithm terminates, all $r \in \W$ are inactive and $\rho$ is 
a fixed point of all $\Lift(\rho,v)$ by condition (3b). The next lemma proves
that Algorithm~\ref{alg:blackbox3} computes the least simultaneous fixed point of
all $\Lift(\cdot, v)$ operators for a parity game.

\begin{lemma}[The Invariant is sufficient]
	Let the $\lift$ function be monotonic in the first argument and $(\W, \prec)$ be a total order.
	The ranking function $\rho$ at termination of Algorithm~\ref{alg:blackbox3} is
	equal to the least simultaneous fixed point of all $\Lift(\cdot,v)$-operators
	for the given parity game $\PG$.
\end{lemma}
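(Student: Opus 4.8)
The plan is to take the three properties of Invariant~\ref{inv:correctness} as given (their preservation across iterations is established separately) and to derive the claim in a short sequence of steps. First I would record what termination buys us. Since the outer while-loop at Line~\ref{alg:bb3:while1} exits only when $D.popActiveSet()$ returns false, at termination no $r \in \W$ is active; hence condition~(3b) of the invariant holds for \emph{every} $r \in \W$. I would also note that $\trho$, the least simultaneous fixed point of the operators $\Lift(\cdot,v)$, exists: $\lift$ is monotone in its first argument and $(\W,\prec)$ is a finite total order, hence a complete lattice with bottom $\mina$ and top $\top$, so Knaster--Tarski applies. Finally $\rho(v)=\max\{r \in \W \mid v \in S_{\succeq r}\}$ is well-defined because $v \in S_{\succeq \mina}=V$ for every $v$.

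The key step --- and the one I expect to be the main obstacle --- is to re-express the sets maintained by the algorithm through $\rho$ and thereby relate $\CP_z$ to $\best$. Using property~(1) (anti-monotonicity) together with the definition of $\rho$, I would first show $S_{\succeq r}=\{v \in V \mid \rho(v)\succeq r\}$: if $\rho(v)\succeq r$ then $v \in S_{\succeq \rho(v)}\subseteq S_{\succeq r}$ by anti-monotonicity, and the reverse inclusion is immediate since $\rho(v)$ is the maximal index with $v \in S_{\succeq \rho(v)}$. Substituting this characterization into the definition of $\CP_z$ and unfolding $\best$ then yields the identity $\CP_z(S_{\succeq r})=\{v \in V \mid \best(\rho,v)\succeq r\}$. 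Concretely, for $v \in V_z$ the $\max$ over successors is $\succeq r$ iff some successor lies in $S_{\succeq r}$, matching the $V_z$-clause of $\CP_z$; and for $v \in V_{\z}$ the $\min$ over successors is $\succeq r$ iff \emph{all} successors lie in $S_{\succeq r}$, matching the $V_{\z}$-clause. The care needed is precisely that the two cases of $\CP_z$ line up with the two cases of $\best$, and that the boundary case $\best(\rho,v)=r$ coincides with the hypothesis of condition~(3b).

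With this identity in hand, proving that $\rho$ is a simultaneous fixed point is short. I would fix an arbitrary $v \in V$ and set $r:=\best(\rho,v)$. By the identity, $v \in \CP_z(S_{\succeq r})$, and since $\best(\rho,v)=r$, condition~(3b) --- available for all $r$ because nothing is active --- gives $\rho(v)=\Lift(\rho,v)(v)$. As $\Lift$ never decreases a rank, this establishes that $\rho$ is a fixed point of every operator $\Lift(\cdot,v)$ simultaneously.

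It remains to argue minimality, which follows by combining the two inequalities. Property~(2) gives $\rho(v)\preceq\trho(v)$ for all $v$, while the facts that $\rho$ is a simultaneous fixed point and $\trho$ is the \emph{least} such fixed point give $\trho(v)\preceq\rho(v)$. Hence $\rho=\trho$, so the ranking function computed at termination is exactly the least simultaneous fixed point of all $\Lift(\cdot,v)$-operators, as claimed. Once the reformulation of $\CP_z(S_{\succeq r})$ in the second paragraph is settled, the remaining implications are essentially bookkeeping over the finite total order.
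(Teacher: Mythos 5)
Your proposal is correct and follows essentially the same route as the paper's proof: Invariant~(2) gives $\rho \preceq \trho$, termination plus Invariant~(3b) gives that $\rho$ is a simultaneous fixed point, and leastness of $\trho$ closes the loop. The only difference is that you explicitly verify the identity $\CP_z(S_{\succeq r})=\{v \mid \best(\rho,v)\succeq r\}$ (so that $v\in\CP_z(S_{\succeq \best(\rho,v)})$ and condition~(3b) indeed applies), a step the paper leaves implicit; this is a welcome clarification rather than a deviation.
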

\begin{proof}
	Consider the ranking function $\rho(v) = \max\{r \in \W \mid v \in S_{\succeq r}\}$
	computed by Algorithm~\ref{alg:blackbox3}.
	By Invariant~\ref{inv:correctness}(2) we have $\trho(v) \succeq \rho(v)$ for all $v \in V$. 
	We next show that $\rho(v)$ is a fixed point of $\Lift(\rho,v)$ for all $v \in V$.
	When the algorithm terminates, no $r \in \W$ is active. 
	Consider an arbitrary $v$ and let $r=\best(\rho,v)$.
	Now, as the set $r$ is not active, by Invariant~\ref{inv:correctness}(3b),
	we have $\rho(v) = \Lift(\rho,v)(v)$.
	Thus $\rho(v)$ is a fixed point of $\Lift(\rho,v)$ for all vertices in
	$V$.
	Therefore, as $\rho$ is a simultaneous fixed point of  all $\Lift(\cdot,v)$-operators
	and $\trho$ is the least such fixed point, 
	we obtain $\rho(v) \succeq \trho(v)$ for all $v \in V$. 
	Hence we have $\rho(v) = \trho(v)$ for all $v \in V$.
\end{proof}

The following lemmas prove each part of the invariant separately. The first part of the 
invariant describes the anti-monotonicity property which is needed to compute the $\best$
function with the $\CP_z$ operator. 

\begin{lemma}\label{lem:antimonoton}
	Invariant~\ref{inv:correctness}(1) holds:
	Let $r_1,r_2 \in \W$ and $r_1 \preceq r_2$.
	Before each iteration of the while-loop at Line~\ref{alg:bb3:while1} we have that if a vertex
	$v$ is in a set $S_{\succeq r_2}$ then it is also in $S_{\succeq r_1}$  (anti-monotonicity).
\end{lemma}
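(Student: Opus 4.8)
The plan is to prove Invariant~\ref{inv:correctness}(1) by induction on the number of iterations of the outer while-loop at Line~\ref{alg:bb3:while1}, reasoning throughout about the \emph{abstract} sets $S_{\succeq r}=D.getSet(r)$ for every $r\in\W$ (not only the explicitly stored breakpoints). For the base case, after the initialization $D.update(\mina,V)$ and $D.update(\top,\emptyset)$ the only stored elements are $\mina$ and $\top$, so $D.getSet(\mina)=V$ while $D.getSet(r)=\emptyset$ for every $r\succ\mina$; this step function is trivially anti-monotone. For the inductive step I assume anti-monotonicity holds at the start of an iteration and show it is preserved when the iteration terminates.

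The heart of the argument is the effect of the inner while-loop at Line~\ref{alg:bb3:while2}. Fix a priority $c\in C$ and let $r'=\lift(r,c)$. I claim that when the inner loop finishes, $P\cap V_c\subseteq S_{\succeq s}$ holds for every $s\preceq r'$, while every $S_{\succeq s}$ with $s\succ r'$ is untouched. The loop starts at $r'$, adds $P\cap V_c$ at Line~\ref{alg:bb3:liftvertices}, and then walks downward breakpoint by breakpoint via $D.getPrevious$, continuing as long as the current set does not already contain $P\cap V_c$. By the induction hypothesis the sets that already contain $P\cap V_c$ form a downward-closed family (if $S_{\succeq s}\supseteq P\cap V_c$ and $s''\prec s$, then $S_{\succeq s''}\supseteq S_{\succeq s}\supseteq P\cap V_c$), so the loop halts exactly upon reaching that family and leaves it unchanged, which establishes the claim. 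Since a stored breakpoint represents the abstract value $S_{\succeq s}$ for every $s$ in its gap, walking breakpoint-to-breakpoint updates all abstract sets in the relevant range at once; this is precisely why operating on the compact representation suffices.

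Granting the claim, preservation of anti-monotonicity reduces to a short case analysis on a pair $r_1\preceq r_2$. If both lie at or below $r'$, or both lie strictly above $r'$, then either both are enlarged by the same set $P\cap V_c$ or neither is changed, and $S_{\succeq r_2}\subseteq S_{\succeq r_1}$ is preserved because adding the same set to both sides keeps the inclusion valid. The only mixed case is $r_1\preceq r'\prec r_2$: here $S_{\succeq r_2}$ is unchanged and, by the induction hypothesis, $S_{\succeq r_2}\subseteq S_{\succeq r'}\subseteq S_{\succeq r_1}$, so $S_{\succeq r_2}$ remains contained in the enlarged $S_{\succeq r_1}$. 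Because the abstract sets only ever grow during an iteration, these single additions compose, and anti-monotonicity persists as a running invariant across all choices of $c\in C$ within the iteration.

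I expect the main obstacle to be the bookkeeping of the compact data structure rather than the set-theoretic core. Concretely, I must show that creating a new breakpoint when $r'=\lift(r,c)$ is not yet stored (Lines~\ref{alg:bb3:if1}--\ref{alg:bb3:store}) and deleting a redundant breakpoint at $D.getPrevious(r')$ (Line~\ref{alg:bb3:removed}) both leave the abstract step function $s\mapsto S_{\succeq s}$ unchanged apart from the intended additions. The deletion fires exactly when $S_{\succeq r'}$ has become equal to $S_{\succeq prev(r')}\cup(P\cap V_c)$, so once both breakpoints have received $P\cap V_c$ their stored sets coincide and dropping the lower one preserves every abstract value. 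Once this equivalence between the stored representation and the abstract family $\{S_{\succeq s}\}_{s\in\W}$ is pinned down, the case analysis above applies verbatim to all of $\W$, completing the induction.
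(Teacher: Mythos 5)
Your proof is correct and follows essentially the same route as the paper's: induction over the iterations of the outer while-loop, with the inner while-loop propagating $P\cap V_c$ downward through the stored breakpoints until it reaches a set that already contains it, which by the induction hypothesis means all smaller-indexed sets contain it as well. Your write-up is somewhat more explicit than the paper's about why halting early is safe (the downward-closedness of the family of sets already containing $P\cap V_c$) and about the mixed case $r_1\preceq r'\prec r_2$, but the underlying argument is the same.
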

\begin{proof}
	We prove the claim by induction over the iterations of the while-loop.
	Initially, the claim is satisfied as the only non-empty set is $S_{min}$.
	It remains to show that when the claim is valid at the beginning of an
	iteration, then the claim also holds in the next iteration. 
	By induction hypothesis, the claim holds for the sets at the beginning of the while-loop. 
	In the trivial case, the algorithm terminates and the claim holds by the induction hypothesis. 
	Otherwise, the sets are only modified at Line~\ref{alg:bb3:liftvertices} and
	stored at Line~\ref{alg:bb3:store}.
	First, the vertices $P \cap V_c$ are added into the set $S_{\succeq r'}$. 
	Let $r''=D.getPrevious(r')$.
	Notice that after activating $r'$ all $r$ with $r'' \prec r \prec r'$ refer to the same set as $r'$
	and thus we add $P \cap V_c$ implicitly to all $r$.
	In the next iteration the while-loop then adds $P \cap V_c$ also to the set $S_{\succeq r''}$.
	As this done iteratively until a set $S_{\succeq r^*}$ with $P \cap V_c \subseteq S_{\succeq r^*}$ is reached (Lines~\ref{alg:bb3:while2}-\ref{alg:bb3:while2end}),
	the algorithm ensures that $P \cap V_c$ is contained in all set $S_{\succeq r''}$ with $r \prec r'$.
	By induction hypothesis we know that the invariant holds for all $r_2 \succ r'$ ($S_{\succeq r_2}$ is unchanged),
	and as the algorithm added $P \cap V_c$ to all set $S_{\succeq r''}$ with $r'' \preceq r'$
	the claim holds for all $r_1, r_2 \preceq r'$.  
\end{proof}

The second part of the invariant shows that the fixed point 
Algorithm~\ref{alg:blackbox3} computes is always smaller or equal to the least
fixed point. In particular, the fixed point computed by the algorithm is
defined as $\rho(v) = \max\{r \in \W \mid v \in S_{\succeq r}\}$ and we denote
the least fixed point with $\trho$. The proof is by induction: In the
beginning, every vertex is initialized with the minimum element which obviously
suffices for the claim. When we apply the $\lift$ function to vertices, we observe
that by the induction hypothesis the current value of a vertex is below or equal to
the fixed point. Additionally,  we obtain  a rank which is also smaller or equal to the lifted value of $\trho$ for every vertex 
as $\lift$ is a monotonic function.

\begin{lemma}\label{lem:underlfp}
	Invariant~\ref{inv:correctness}(2) holds: Throughout
	Algorithm~\ref{alg:blackbox3} we have $\trho(v) \geq \rho(v)$ for all $v \in V$.
\end{lemma}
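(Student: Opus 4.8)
The plan is to prove Invariant~\ref{inv:correctness}(2) by induction on the iterations of the while-loop at Line~\ref{alg:bb3:while1}, tracking the claim $\trho(v) \succeq \rho(v)$ for all $v \in V$, where $\trho$ is the least simultaneous fixed point and $\rho(v) = \max\{r \in \W \mid v \in S_{\succeq r}\}$. For the base case, after initialization every vertex lies in $S_{\succeq \mina}$ (since $D.update(\mina, V)$ puts all of $V$ into that set) and in no higher set, so $\rho(v) = \mina$ for all $v$. Since $\mina$ is the minimum element of the total order, $\trho(v) \succeq \mina = \rho(v)$ trivially holds.

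For the inductive step, I assume $\trho(v) \succeq \rho(v)$ holds before an iteration and show it is preserved. The value $\rho(v)$ can only increase when $v$ is added to a set $S_{\succeq r'}$ at Line~\ref{alg:bb3:liftvertices}, which happens precisely when $v \in P \cap V_c$ for the current priority $c$, where $P = \CP_z(S_{\succeq r})$ and $r' = \lift(r,c)$. The key observation is that $v \in P = \CP_z(S_{\succeq r})$ means $\best(\rho, v) \succeq r$: this is exactly the correspondence established at the beginning of Section~\ref{sec:bb3}, justified by the anti-monotonicity property of Lemma~\ref{lem:antimonoton}, which guarantees the $\CP_z$ operator on $S_{\succeq r}$ correctly captures the vertices whose best successor has rank $\succeq r$. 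After the update, $\rho(v)$ becomes at most $\lift(r, \alpha(v)) = \lift(r, c)$ since $v \in V_c$ forces $\alpha(v) = c$.

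The heart of the argument is then the following chain of inequalities. Since $\trho$ is a fixed point of $\Lift(\cdot, v)$, we have $\trho(v) = \Lift(\trho, v)(v) = \lift(\best(\trho, v), \alpha(v))$. By the induction hypothesis $\trho(w) \succeq \rho(w)$ for every successor $w \in \Out(v)$, and by the definition of $\best$ together with whether $v \in V_z$ or $v \in V_{\bar z}$, this pointwise domination lifts to $\best(\trho, v) \succeq \best(\rho, v) \succeq r$. Applying monotonicity of $\lift$ in its first argument yields $\trho(v) = \lift(\best(\trho, v), \alpha(v)) \succeq \lift(r, \alpha(v)) = r'$, which is exactly the new value (an upper bound on it) assigned to $\rho(v)$. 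Hence the invariant is restored. I would also note that the inner while-loop (Lines~\ref{alg:bb3:while2}--\ref{alg:bb3:while2end2}) only propagates $v$ into sets $S_{\succeq r''}$ with $r'' \preceq r'$, which can only lower or preserve the resulting value of $\rho(v)$, so these steps cannot violate the bound either.

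The main obstacle I anticipate is the claim $\best(\trho, v) \succeq \best(\rho, v)$, namely that the pointwise ordering $\trho \succeq \rho$ on successors transfers correctly through the $\best$ function despite its case split on $V_z$ versus $V_{\bar z}$ (one case takes a max over successors, the other a min). Both cases are in fact monotone in the successor ranks, so the inequality holds, but I must handle both branches and carefully invoke monotonicity of $\lift$ in exactly the right direction. A subtler point requiring care is ensuring that $v \in P$ genuinely implies $\best(\rho, v) \succeq r$ rather than a weaker statement; this relies on the anti-monotonicity of the stored sets (Lemma~\ref{lem:antimonoton}) so that $S_{\succeq r}$ faithfully represents all vertices of rank at least $r$, which is why that lemma must be established first.
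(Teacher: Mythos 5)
Your proof is correct and follows essentially the same route as the paper's: induction over the iterations of the outer while-loop, with the key chain $r \preceq \best(\rho,v) \preceq \best(\trho,v)$ combined with monotonicity of $\lift$ and the fixed-point property of $\trho$ to get $r' = \lift(r,c) \preceq \trho(v)$, plus the observation that the inner while-loop only touches lower-ranked sets and hence cannot change $\rho$. The only minor quibble is that the implication $v \in \CP_z(S_{\succeq r}) \Rightarrow \best(\rho,v) \succeq r$ follows directly from the definition of $\rho$ as a maximum over the sets containing each successor and does not actually need Lemma~\ref{lem:antimonoton} (which is needed for the converse containment), but this over-attribution is harmless.
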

\begin{proof}
	Before the while-loop at Line~\ref{alg:bb3:while1} the claim is obviously
	satisfied as $\trho(v) \succeq \min$ for all $v \in V$.
	We prove the claim by induction over the iterations of the while-loop:
	Assume we have $\rho(v) \preceq \trho(v)$ for all $v \in V$ before an iteration of
	the while-loop. 
	The function $\rho(\cdot)$ is only changed at
	Line~\ref{alg:bb3:liftvertices} and stored at Line~\ref{alg:bb3:store} where the set 
	$(P \cap V_c)$ is added to $S_{\succeq r'}$.
	For $v \in P \cap V_c$ we have that $v$ is a priority $c$ vertex and
	either $v$ is a player-$z$ vertex with a successor in $S_{\succeq r}$ or
	a player-$\z$ vertex with all successors in $S_{\succeq r}$.
	Thus, $r \preceq \best(\rho,v)$ for $v \in P$.
	At Line~\ref{alg:bb3:lift} we compute the $\lift$-operation
	for ranking $r$ with priority $c$ which results in the ranking $r'$ for the first iteration of the while-loop.
	By the monotonicity of the $\lift$ operation and the induction hypothesis we have that
	$r'=\lift(r,c)(v) \preceq \lift(\best(\rho,v),c)(v) \preceq \lift(\best(\trho,v),c)=\trho(v)$ for $v \in P \cap V_c$
	and thus adding $v$ to $S_{\geq r'}$ maintains the invariant
	(if $\rho(v) \succ r'$ beforehand it is not changed and otherwise it is lifted to $r' \prec \trho(v)$).
	In the later iterations of the while-loop $P \cap V_c$ is added to sets with smaller $r'$, which does not 
	affect $\rho$, as these vertices already appear in sets with larger rank.
\end{proof}

The following lemma proves the third part of Invariant~\ref{inv:correctness}:
Either there is an active $r \in \W$, i.e., the set $S_{\geq r}$ needs to be
processed, or $\rho(v)$ is a fixed point. 
We prove the property again by induction:
Initially, the set $\mina \in \W$ is active and every other set is empty
which trivially fulfills the property. Then, in every iteration
when we change a set with value $r$ we either activate it, or
there is a set with a value $r' \succeq r$ where $S_{\succeq r'}$ subsumes
$S_{\succeq r}$. In the former case, the condition is instantly
fulfilled. In the latter case, there is no vertex $v$ where $best(\rho,v)= r$
which renders $S_{\succeq r}$ irrelevant by definition of $\rho$.  

\begin{lemma}
	Invariant~\ref{inv:correctness}(3) holds: For all $r \in \W$:
	\begin{compactenum}
	\item $S_{\succeq r}$ is active or,
	\item $\forall v \in \CP_z(S_{\succeq r})\!:
		\text{ if } \best(\rho,v) = r$, then	$\rho(v) = \Lift(\rho,v)(v)$
	\end{compactenum} 
\end{lemma}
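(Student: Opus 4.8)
The plan is to prove Invariant~\ref{inv:correctness}(3) by induction over the iterations of the outer while-loop at Line~\ref{alg:bb3:while1}, exactly as the preceding lemmas did for parts (1) and (2). The statement is a disjunction for each $r \in \W$, so at every point in the execution I must show that each $r$ either carries the active flag or satisfies the fixed-point condition (3b). First I would establish the base case: after initialization only $\min$ is active, and every other set $S_{\succeq r}$ is empty (equivalently $S_{\succeq r} = \emptyset$ for $r \succ \min$), so $\CP_z(S_{\succeq r}) = \emptyset$ and condition (3b) holds vacuously for every non-active $r$. Thus the invariant holds before the first iteration.

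For the inductive step, I would fix an iteration that pops some active $r$ via $D.popActiveSet()$, which renders $r$ inactive, and then track precisely which sets the body of the loop modifies. The only sets that change are the $S_{\succeq r'}$ touched at Line~\ref{alg:bb3:liftvertices} (and stored at Line~\ref{alg:bb3:store}), where $r' = \lift(r,c)$ and the vertices $P \cap V_c$ with $P = \CP_z(S_{\succeq r})$ are added. I would split the argument into two parts corresponding to how the invariant could be disturbed. First, for the popped $r$ itself: because we just computed $P = \CP_z(S_{\succeq r})$ and lifted every $v \in P$ with priority $c$ to $\lift(r,c)$, every vertex $v$ with $\best(\rho,v) = r$ now has $\rho(v) \succeq \lift(r,c) = \Lift(\rho,v)(v)$, and combined with Invariant~\ref{inv:correctness}(2) this forces equality $\rho(v) = \Lift(\rho,v)(v)$, so (3b) holds for $r$ and its newly-inactive status is consistent. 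Second, for each $r'$ whose set grew: the algorithm's conditional at Lines~\ref{alg:bb3:if1}--\ref{alg:bb3:activate} activates $r'$ precisely when $S_{\succeq r'}$ becomes a proper superset of its successor, so in that case (3a) is restored; and in the complementary case where $S_{\succeq r'}$ equals the set below it shifted by $P \cap V_c$ (the test at Line~\ref{alg:bb3:else1}), the set is removed, meaning no distinct $r'$ remains for which (3b) could fail.

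The subtle point I would have to handle carefully is the interaction with the $\best$ function and anti-monotonicity. The condition (3b) is phrased in terms of $\best(\rho,v)$, and $\rho$ itself changes during the iteration, so I must argue that for any $r''$ that was neither popped nor modified, the value of $\best(\rho,v)$ for the relevant vertices does not drift in a way that reactivates the fixed-point obligation without the algorithm noticing. Here I would lean on Lemma~\ref{lem:antimonoton} (Invariant~(1)): because the sets are anti-monotone and $P \cap V_c$ is propagated down to all $S_{\succeq r''}$ with $r'' \preceq r'$ inside the inner while-loop, the quantity $\best(\rho,v) \succeq r''$ is captured correctly by $\CP_z(S_{\succeq r''})$, so no vertex with $\best(\rho,v) = r''$ escapes the bookkeeping.

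The main obstacle I anticipate is the precise alignment between the combinatorial definition of \emph{active} (a vertex $r$ being marked active in $D$) and the semantic condition that $S_{\succeq r}$ still needs processing, across iterations where $r$ is neither the popped element nor directly modified but is affected indirectly through the data-structure operations $D.update$, $D.removeSet$, and the implicit sharing among equal sets. I would discharge this by appealing to the remark following the Data Structure definition, which asserts that an $r \in \W$ is marked active in $D$ if and only if it is active by the formal definition; granting this equivalence (already argued in the text preceding Invariant~\ref{inv:correctness}), the invariant for all untouched $r''$ follows directly from the induction hypothesis, since neither their sets nor their active status changed. Assembling the base case, the two modification cases, and the untouched case then completes the induction and establishes Invariant~\ref{inv:correctness}(3).
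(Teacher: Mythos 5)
Your proof follows essentially the same route as the paper's: induction over the iterations of the outer while-loop, with the base case handled by the emptiness of all sets except $S_{\succeq \min}$, and the inductive step split into the popped element (discharged because every $v \in \CP_z(S_{\succeq r})$ with $\best(\rho,v)=r$ is lifted to $\lift(r,\alpha(v))$), the sets that grow (discharged by activation, or by coinciding with a neighbouring set so that no vertex has rank exactly $r'$ and (3b) is vacuous), and the untouched elements covered by the induction hypothesis. The one slight looseness is that the equality in (3b) does not follow from the \emph{statement} of Invariant~\ref{inv:correctness}(2) as you claim, but rather from the argument inside its proof (namely that a vertex is only ever lifted to $r'=\lift(r,c)\preceq\lift(\best(\rho,v),c)$, so $\rho(v)\preceq\Lift(\rho,v)(v)$ throughout); the paper itself is no more explicit on this point.
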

\begin{proof}
    
	We prove this invariant by induction over the iterations of the
	while-loop: Before the while-loop at Line~\ref{alg:bb3:while1} the claim is
	obviously satisfied as we activate $\mina$ which contains all
	vertices; for all other $r \in \W$ the set $S_{\succeq r}$ is empty and thus
	condition (2) is trivially satisfied.

	Assume the condition holds at the beginning of the loop.
	We can, therefore, assume by the induction hypothesis that the condition holds for all the sets.
	If there is no active $r \in \W$, the algorithm terminates and the condition
	holds by the induction hypothesis.
	The condition for a set $S_{\succeq r}$ can be violated only if either the set $S_{\succeq r}$ is changed or
	the set $S_{\succeq r}$ is deactivated.
	That is either at Line~\ref{alg:bb3:deactivate}, Line~\ref{alg:bb3:liftvertices} 
	or Line~\ref{alg:bb3:removed} of the algorithm.

	Let us first consider the changes made in the while-loop.
	If a set $S_{\succeq r}$ is changed in Line~\ref{alg:bb3:liftvertices}, then the algorithm
	either activates $r$ (Line~\ref{alg:bb3:activate}) and thus satisfies (1)
	or $S_{\succeq next(r)} = S_{\succeq r}$ which implies that $\best(\rho,v)
	\neq r$ and thus (2) is fulfilled trivially.
	At Line~\ref{alg:bb3:removed} there is no vertex $v$ with $\best(\rho,v) = r$ 
	(as there is no vertex $w$ with $\rho(w)=r$) 
	and thus (2) is satisfied (and it
	is safe to remove/deactivate the set in  Line~\ref{alg:bb3:removed}).

	Now consider the case where we remove the set $S_{\succeq r}$ 
	and make $r$ inactive at Line~\ref{alg:bb3:deactivate}.
	If the set $S_{\succeq r}$ is unchanged during the iteration of the outer while-loop then
	$S_{\succeq r}$ satisfies condition (2) after the iteration. This is because 
	for all $v$ with $\best(\rho,v) = r$ and $\alpha(v)=c$ we have that
	if $v$ is not already contained in $S_{\succeq \lift(r,c)}$ the algorithm adds it to the set $S_{\succeq \lift(r,c)}$
	in Line~\ref{alg:bb3:liftvertices} in the first iteration of the while-loop
	when processing $c$. This is equivalent to applying $\Lift(\rho,v)(v) =
	\lift(r,c)$. 
	If the set $S_{\succeq r}$ is changed during the iteration then this happens in the inner while-loop.
	As argued above, then either $r$ is activated and thus
	satisfies (1) or  $S_{\succeq next(r)} = S_{\succeq r}$ holds. Thus,
	there is no vertex $v$ with $\best(\rho,v) = r$, i.e., (2) is satisfied.
\end{proof}

\subsubsection{Symbolic Resources}

In the following, we discuss the amount of symbolic resources
Algorithm~\ref{alg:blackbox3} needs. We determine the number of \os
operations, the number of basic set operations and the
symbolic space consumption.

\begin{proposition}\label{prop:numberactivations}
	The number of \os operations in Algorithm~\ref{alg:blackbox3} is in $O(n \cdot |\W|)$.
\end{proposition}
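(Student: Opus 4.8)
The plan is to reduce the count of one-step operations to the number of iterations of the outer while-loop. Inspecting Algorithm~\ref{alg:blackbox3}, the \emph{only} place a predecessor-type operation is invoked is Line~\ref{alg:bb3:cp}, where we compute $P \gets \CP_z(S_{\succeq r})$; every other line uses either a basic set operation ($\cup$, $\cap$ with some $V_c$, $\subseteq$, $\supset$, $=$) or a data-structure operation on $D$ (which, by Remark~\ref{remark1}, never touches the game graph). Since $\CP_z(S)$ can be written with a constant number of $\Pre$ operations and basic set operations, each iteration of the outer while-loop performs $O(1)$ one-step operations. Hence it suffices to show that the outer while-loop at Line~\ref{alg:bb3:while1} executes $O(n \cdot |\W|)$ times.

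Each outer iteration corresponds to one successful call $D.popActiveSet()$, which consumes one element that was previously marked active via $D.activate$ and not deactivated in the meantime. Since deactivation only removes work, the number of successful pops is at most the total number of activations, so I would instead bound the total number of activations by $O(n \cdot |\W|)$. I would argue that a fixed $r \in \W$ is activated at most $n+1$ times. Every activation other than the initial one (Line~\ref{alg:bb3:activeinit}) occurs at Line~\ref{alg:bb3:activate}, which lies inside the inner while-loop and is reached only after the union at Line~\ref{alg:bb3:liftvertices} has been executed for the current $r' = r$. The inner while-loop guard at Line~\ref{alg:bb3:while2} reads $P \cap V_c \not\subseteq S_{\succeq r'}$, so reaching Line~\ref{alg:bb3:liftvertices} guarantees that at least one new vertex is inserted into $S_{\succeq r}$. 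Combined with the fact that the algorithm never removes vertices from any set $S_{\succeq r}$ (the sets only ever grow, the companion fact used in Lemma~\ref{lem:antimonoton}), this shows that between any two activations of $r$ the cardinality $|S_{\succeq r}|$ strictly increases. As $S_{\succeq r} \subseteq V$ and $|V| = n$, the set can strictly grow at most $n$ times; thus $r$ is activated at most $n$ times beyond its initial activation. Summing over all $r \in \W$ gives at most $(n+1)\cdot|\W| = O(n \cdot |\W|)$ activations, hence $O(n \cdot |\W|)$ outer iterations and $O(n \cdot |\W|)$ one-step operations.

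The main obstacle I expect is making the ``each activation is charged to a distinct strict growth of $S_{\succeq r}$'' argument fully rigorous in the presence of the data-structure sharing. Specifically, $D.update(r,S)$ implicitly updates every $S_{\succeq r'}$ with $r' \prec r$ that equalled $S_{\succeq r}$ beforehand, and $D.removeSet$ merely deactivates rather than shrinks a set; I would need to confirm that these implicit updates never decrease any conceptual set $S_{\succeq r}$ and never create an activation of $r$ without a corresponding strict increase of $|S_{\succeq r}|$. Checking that the shared representation preserves monotone non-decrease of each $S_{\succeq r}$ (so that the charging scheme is sound) is the delicate part; the counting itself is then immediate.
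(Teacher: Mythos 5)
Your proposal is correct and follows essentially the same route as the paper: the paper likewise observes that the single $\CP_z$ call per outer iteration is the only one-step operation, and bounds the number of iterations by charging each activation at Line~\ref{alg:bb3:activate} to the insertion of at least one new vertex into $S_{\succeq r}$, giving at most $n$ activations per element of $\W$ and hence $O(n\cdot|\W|)$ in total. The data-structure subtlety you flag (implicit updates via sharing) is glossed over in the paper's proof as well, and your observation that sets only grow and that every activation is preceded by a strict growth in the same inner-loop iteration is exactly what makes the charging sound.
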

\begin{proof}
	Each iteration of the while-loop at Line~\ref{alg:bb3:while1} processes an
	active $r$. That means, that the set $S_{\succeq r}$ was changed in a prior
	iteration. We use a symbolic one-step operation at Line~\ref{alg:bb3:cp} for
	each active $S_{\succeq r}$. It, therefore, suffices to count the number of possibly
	active sets throughout the execution of the algorithm. 
	Initially only $S_{\succeq \min}$ is active. After extracting an active
	set out of the data structure $D$, it is deactivated at Line~\ref{alg:bb3:deactivate}.
	We only activate a set $S_{\succeq x}$ when a new vertex is added to it at
	Line~\ref{alg:bb3:activate}. Because there can only be $n$ vertices with
	ranking $\succeq x$ for all $x \in \W$ the size of each set  $|S_{\succeq x}|$ is smaller or equal to $n$. 
	In the worst case, we eventually put every vertex into every set $S_{\succeq x}$ where $x \in
	\W$. Thus we activate $n \cdot |\W|$ sets which is equal to the number of 
	symbolic one-step operations.
\end{proof}

A similar argument works for analysing the number of basic set operations. 
\begin{proposition}\label{prop:bb3:setoperations}
	The number of basic set operations in Algorithm~\ref{alg:blackbox3} is in
	$O(\numprio \cdot n \cdot |\W| )$.
\end{proposition}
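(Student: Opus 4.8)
The plan is to reuse the counting scheme of Proposition~\ref{prop:numberactivations} and layer on top of it a per-priority factor together with an amortized charging argument. By Proposition~\ref{prop:numberactivations}, the outer while-loop at Line~\ref{alg:bb3:while1} is entered $O(n \cdot |\W|)$ times, since each entry performs exactly one \os operation on a freshly activated set and the total number of activations is $O(n \cdot |\W|)$. I would then account for the basic set operations (the operations $\cup,\cap,\subseteq,=$ on vertex sets; note that $D.getSet$, $D.getNext$, etc.\ are data-structure operations on the AVL tree and do not count here) in two buckets: (i)~those charged once per iteration of the for-loop at Line~\ref{alg:bb3:for1}, and (ii)~those charged inside the inner while-loop at Line~\ref{alg:bb3:while2}.

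For bucket~(i), each outer iteration runs the for-loop over all $c \in C$, i.e.\ exactly $\numprio$ times, and each pass evaluates the guard $P \cap V_c \subseteq S_{\succeq r'}$ of the inner while-loop, which is $O(1)$ basic set operations (one intersection and one inclusion test, reusing the already computed $P$). Multiplying the $O(n \cdot |\W|)$ outer iterations by the $\numprio$ priorities yields $O(\numprio \cdot n \cdot |\W|)$ basic set operations for this bucket.

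For bucket~(ii) the hard part is that a single pass of the for-loop may trigger many iterations of the inner while-loop, so a per-pass worst-case bound (which is only $O(n)$) would be too weak; instead I would amortize over the whole run. The guard of the inner while-loop guarantees that whenever its body executes we have $P \cap V_c \not\subseteq S_{\succeq r'}$, so the assignment at Line~\ref{alg:bb3:liftvertices} inserts into $S_{\succeq r'}$ at least one vertex that was not present before. The key observation is that for a fixed $r' \in \W$ the set $S_{\succeq r'}$ only ever grows over the execution: the removal at Line~\ref{alg:bb3:removed} is merely a consolidation of two representations that have become equal and never deletes a vertex from the ranking. Hence each pair $(v,r')$ with $v \in V$ and $r' \in \W$ can account for the first insertion of $v$ into $S_{\succeq r'}$ at most once, and since every inner iteration performs at least one such first insertion, the total number of inner iterations over the entire algorithm is at most $n \cdot |\W|$. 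Each such iteration performs only $O(1)$ basic set operations (the union at Line~\ref{alg:bb3:liftvertices}, the strict-inclusion test at Line~\ref{alg:bb3:if1}, and the equality test at Line~\ref{alg:bb3:else1}), so bucket~(ii) contributes $O(n \cdot |\W|)$.

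Summing the two buckets gives $O(\numprio \cdot n \cdot |\W|) + O(n \cdot |\W|) = O(\numprio \cdot n \cdot |\W|)$, as claimed. The only delicate step, which I would make precise first, is the monotone-growth claim underlying the $(v,r')$ charging: one must verify that $D.removeSet$ and the implicit updates of equal sets performed by $D.update$ preserve the invariant that, viewed as a function of $r'$, the vertex set currently represented for $S_{\succeq r'}$ never loses an element (this is essentially Invariant~\ref{inv:correctness}(1) together with the semantics of the data structure). Everything else is routine bookkeeping.
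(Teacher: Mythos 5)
Your proposal is correct and takes essentially the same route as the paper's own proof: both split the cost into the inner-loop guard tests, bounded by $O(\numprio \cdot n \cdot |\W|)$ via the $O(n \cdot |\W|)$ iteration count from Proposition~\ref{prop:numberactivations}, and the inner while-loop bodies, amortized by the fact that each iteration strictly enlarges some $S_{\succeq r'}$ so that each $r' \in \W$ is processed at most $n$ times, contributing $O(n \cdot |\W|)$ in total. Your $(v,r')$-pair charging is just a slight rephrasing of the paper's charge-to-$r'$ scheme, and the monotone-growth caveat you flag is likewise implicit in the paper's argument.
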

\begin{proof}[Proof of Proposition~\ref{prop:bb3:setoperations}.]
	As proven in Proposition~\ref{prop:numberactivations}, there are $O(n \cdot|\W|)$
	iterations of the outer while-loop and thus $O(n \cdot |\W|)$ iterations of the for-loop. 
	Thus the inner while-loop is started $O(\numprio n \cdot |\W|)$ times.
	The test whether the while-loop is started only requires two basic set operations and the overall costs 
	are bound by $O(\numprio n \cdot |\W|)$.
	We bound the overall costs for the iterations of the inner while-loop by an amortized analysis.
	First, notice that each iteration just requires 8 basic set operations 
	(including testing the while condition afterward).	
	In each iteration for a value $r' \in \W$ we charge the $r'$ for the
	involved basic set operations.
	Notice, that in each such an iteration new vertices are added to the set $S_{\succeq r'}$
	and thus $r'$ is processed at most $n$ times.
	Thus each $r' \in \W$ is charged for at most $8n$ basic set operations 
	Therefore, the number of basic set operations is $O(\numprio \cdot n \cdot |\W|) + O(n \cdot
	|\W|) = O(\numprio \cdot n \cdot |\W|)$. 
\end{proof}

Due to Proposition~\ref{lem:bb3:correctness},
Proposition~\ref{prop:numberactivations},
Proposition~\ref{prop:bb3:setoperations} and the fact that we use $\leq n$ sets
in the data structure $D$, we obtain Theorem~\ref{thm:bb3}.

\begin{theorem}\label{thm:bb3}
	Given a parity game, a finite total order $(\W,\succ)$ and a monotonic function $\lift$
	we can compute the least fixed point of all
	$\Lift(\cdot,v)$ operators with $O(n \cdot |\W|)$ \os operations, $O(\numprio \cdot n \cdot |\W|)$ basic set operations, 
	and $O(n)$ symbolic space. 
\end{theorem}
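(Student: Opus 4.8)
The plan is to assemble Theorem~\ref{thm:bb3} directly from the three results established immediately above it, since the theorem is essentially a summary statement. First I would invoke Proposition~\ref{lem:bb3:correctness} to establish that Algorithm~\ref{alg:blackbox3} correctly computes the least simultaneous fixed point of all $\Lift(\cdot,v)$-operators. This handles the correctness claim of the theorem and relies on the hypotheses that $(\W,\prec)$ is a finite total order with a minimum and maximum element and that $\lift$ is monotonic in its first argument.

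Next I would account for the three resource bounds in turn. The bound of $O(n \cdot |\W|)$ \os{}operations is exactly the content of Proposition~\ref{prop:numberactivations}, whose proof charges one $\CP_z$ call per activation and observes that each of the $|\W|$ sets $S_{\succeq r}$ can be activated at most $n$ times (once per newly added vertex, since $|S_{\succeq r}| \leq n$). The bound of $O(\numprio \cdot n \cdot |\W|)$ basic set operations is the content of Proposition~\ref{prop:bb3:setoperations}, which combines the $O(n \cdot |\W|)$ outer iterations with the for-loop over the $\numprio$ priorities and an amortized analysis charging each inner while-loop iteration to the element $r' \in \W$ whose set gains a vertex.

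For the symbolic space bound, I would appeal to the structural invariant maintained by the data structure $D$: by the description in Remark~\ref{remark1} and the analysis of $D$, the algorithm preserves anti-monotonicity among the stored sets (Invariant~\ref{inv:correctness}(1), proved in Lemma~\ref{lem:antimonoton}) and stores only pointers to \emph{distinct} sets $S_{\succeq r}$. Since the $S_{\succeq r}$ form a monotonically nested chain of subsets of $V$ and only those $r$ with $S_{\succeq r} \supsetneq S_{\succeq \mathit{next}(r)}$ are retained, there can be at most $n$ distinct such sets. Hence the AVL-tree underlying $D$ stores at most $n$ set-pointers, giving $O(n)$ symbolic space.

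I do not expect a genuine obstacle here, as the theorem is a bookkeeping composition of already-proved statements; the only care required is to confirm that the hypotheses of the three propositions (finiteness and total order of $\W$, monotonicity of $\lift$, and the distinctness/anti-monotonicity guarantee of $D$) are all among the theorem's assumptions, which they are. The subtlest of the three ingredients is the $O(n)$ space claim, because it depends not merely on the naive count $|\W|$ of possible ranks but on the fact that the data structure deliberately collapses equal sets via $D.removeSet$ at Lines~\ref{alg:bb3:else1}--\ref{alg:bb3:removed}, so I would make sure to cite the distinctness property explicitly rather than letting it be implicit.
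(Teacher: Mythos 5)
Your proposal matches the paper's own argument: the theorem is obtained by combining Proposition~\ref{lem:bb3:correctness} (correctness), Proposition~\ref{prop:numberactivations} (one-step operations), Proposition~\ref{prop:bb3:setoperations} (basic set operations), and the observation that $D$ stores at most $n$ distinct nested sets, which is exactly the paper's stated derivation. Your extra care in justifying the $O(n)$ space bound via anti-monotonicity and the collapsing of equal sets is consistent with the paper's analysis of the data structure $D$ and does not constitute a different route.
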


\section{Implementing the Ordered Progress Measure}

In this section, we plug the ordered approach to progress measure (OPM) described
by Fearnley et al.~\cite{FearnleyJS0W17} into Algorithm~\ref{alg:blackbox3}. To do this, 
we recall the witnesses they use in their algorithm and encode it with a specially-tailored
technique to obtain an algorithm with a sublinear amount of symbolic space. 
Finally, we argue that the function $\lift:\W \times C \mapsto \W$ and the total
order $(\W,\preceq)$ described in~\cite{FearnleyJS0W17} can be used to
fully implement Algorithm~\ref{alg:blackbox3}.

\smallskip\noindent\emph{The Ordered Progress Measure.}
To implement the ordered progress measure algorithm we need to argue that the
$\lift$-operation is monotonic in the first argument and the order
$(\W,\preceq)$ is a total finite order in order to fulfill the conditions of
Algorithm~\ref{alg:blackbox3}.
Let $\PG$ be a parity game and $C = \{0,\dots,\numprio-1\}$ be the set of priorities
in $\PG$. 
The set $\W$ in the ordered progress measure consists of tuples of priorities of length $k$,
where $k \in O(\log n)$. 
Each element in the tuple is an element of $C\_ = C \cup \{\_\}$, i.e., it is either a priority or "$\_$". 
The set $C\_$ has a total order $(C\_, \preceq)$ such that 
$\_$ is the smallest element, 
odd priorities are order descending and are considered smaller than 
even priorities which are ordered ascending.
The order $(\W,\preceq)$ is then obtained by extending the order $(C\_,\preceq)$
lexicographically to the tuples $r \in \W$.

For the details of the $\lift$ function we refer the reader to the work of
Fearnley et al.~\cite{FearnleyJS0W17}. An implementation of the lift
operation can be found at the GitHub repository
of the Oink system~\cite{dijk2018oink}.

By the results in \cite{FearnleyJS0W17} the order $(\W,\preceq)$ and $\lift$ meet the requirements
of our algorithms.

\begin{lemma}\label{lem:lift_monotonic}
   The following holds:
	(1) The function $\lift:\W \times C \mapsto \W$ is monotonic in the first
	parameter~\cite[p.6]{FearnleyJS0W17}.
	(2) The order $(\W,\preceq)$ is a total finite order~\cite[p.3]{FearnleyJS0W17}.
	(3) Let $\rho$ be the least simultaneous fixed point of all $\Lift(\cdot,v )$ operators. Then 
	$\rho(v) = \top$ iff player $\E$ has a strategy to win the parity game $\PG$
	when starting from $v$~\cite[Lemma 7.3, Lemma 7.4]{FearnleyJS0W17}.
\end{lemma}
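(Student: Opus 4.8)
The plan is to reduce each of the three claims to the corresponding result established by Fearnley et al.~\cite{FearnleyJS0W17}, after verifying that the objects we plug into Algorithm~\ref{alg:blackbox3} are exactly the ones they define. First I would recall their construction explicitly: the set $\W$ consists of length-$k$ tuples over $C\_ = C \cup \{\_\}$ with $k \in O(\log n)$, the order $(C\_,\preceq)$ places $\_$ as the minimum, orders odd priorities descending below all even priorities, and orders even priorities ascending, and the order on $\W$ is the lexicographic lifting of $(C\_,\preceq)$. With these definitions fixed, the remaining task is to match our usage against theirs and invoke the appropriate lemmas.

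For claim (2) I would argue directly and elementarily: $C\_$ is finite and, by construction, linearly ordered; the lexicographic extension of a finite total order to tuples of a fixed finite length $k$ is again a total order, and the number of such tuples is $|C\_|^k$, which is finite. Hence $(\W,\preceq)$ is a total finite order, matching the statement cited from~\cite[p.3]{FearnleyJS0W17}.

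For claims (1) and (3) the substantive work has already been carried out in~\cite{FearnleyJS0W17}, so the goal is not to reprove these facts but to certify that their hypotheses are met by our instantiation. For (1) I would point out that the $\lift$ function we use is literally the one defined on p.6 of~\cite{FearnleyJS0W17} (an implementation of which is available in the Oink system~\cite{dijk2018oink}), and that monotonicity in the first argument is established there; since Algorithm~\ref{alg:blackbox3} only ever uses $\lift$ as a black box satisfying monotonicity, this suffices. For (3) the equivalence between having rank $\top$ at the least simultaneous fixed point and player~$\E$ winning from $v$ is exactly the content of~\cite[Lemma 7.3, Lemma 7.4]{FearnleyJS0W17}; here I would additionally verify that the least simultaneous fixed point of the $\Lift(\cdot,v)$-operators computed by our algorithm coincides with their progress measure, which follows because both are the least fixed point of the same monotone lifting operator over the same $(\W,\preceq)$ and $\lift$.

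The main obstacle is one of bookkeeping rather than mathematics: the depth of claims (1) and (3) lives entirely in~\cite{FearnleyJS0W17}, and our contribution is to confirm that Algorithm~\ref{alg:blackbox3} is instantiated with precisely their $(\W,\preceq)$ and $\lift$, so that the least fixed point our black-box algorithm returns is identical to their progress measure. Once this identification is made, all three claims follow immediately from the cited results.
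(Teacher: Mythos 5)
Your proposal is correct and matches the paper's approach: the paper gives no separate proof for this lemma, establishing all three claims purely by citation to Fearnley et al., exactly as you do. Your added elementary check for (2) and the remark that the least fixed point coincides with their progress measure are harmless elaborations of the same citation-based argument.
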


Theorem~\ref{thm:bb3} together with
Lemma~\ref{lem:lift_monotonic} imply the following theorem.

\begin{theorem}\label{thm:ordered_main}
	Algorithm~\ref{alg:blackbox3} implemented with the OPM computes the winning set
	of a parity game with $O(n\cdot|\W|)$ \os operations, $O(\numprio \cdot n \cdot |\W|)$
	basic set operations, and $O(n)$ symbolic space.
\end{theorem}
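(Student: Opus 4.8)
The plan is to combine the general resource bounds of Theorem~\ref{thm:bb3} with the structural facts about the ordered progress measure collected in Lemma~\ref{lem:lift_monotonic}. Theorem~\ref{thm:bb3} is a black-box statement: for \emph{any} finite total order $(\W,\prec)$ and \emph{any} $\lift$ function that is monotonic in its first argument, Algorithm~\ref{alg:blackbox3} computes the least simultaneous fixed point of all $\Lift(\cdot,v)$-operators using $O(n\cdot|\W|)$ \os operations, $O(\numprio\cdot n\cdot|\W|)$ basic set operations, and $O(n)$ symbolic space. So the entire argument reduces to verifying that the OPM of Fearnley et al.\ legitimately instantiates the hypotheses of Theorem~\ref{thm:bb3}, and then reading off what the computed fixed point means for the parity game.

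\smallskip\noindent First I would invoke Lemma~\ref{lem:lift_monotonic}(2) to note that $(\W,\preceq)$, the lexicographic extension of $(C\_,\preceq)$ to tuples of length $k\in O(\log n)$, is a total finite order with a minimum element and a maximum element $\top$; this is precisely the order-theoretic precondition of Theorem~\ref{thm:bb3}. Next I would invoke Lemma~\ref{lem:lift_monotonic}(1), which guarantees that the OPM $\lift:\W\times C\mapsto\W$ is monotonic in its first parameter, matching the remaining precondition. With both hypotheses discharged, Theorem~\ref{thm:bb3} applies verbatim and yields that Algorithm~\ref{alg:blackbox3}, instantiated with this particular $(\W,\preceq)$ and $\lift$, terminates and computes the least simultaneous fixed point $\rho$ of all $\Lift(\cdot,v)$-operators within exactly the claimed resource bounds $O(n\cdot|\W|)$ \os operations, $O(\numprio\cdot n\cdot|\W|)$ basic set operations, and $O(n)$ symbolic space. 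The resource bounds require no recomputation: they are stated abstractly in $|\W|$ in Theorem~\ref{thm:bb3} and the OPM simply supplies a concrete $\W$.

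\smallskip\noindent The final step is to translate ``least fixed point'' into ``winning set.'' For this I would appeal to Lemma~\ref{lem:lift_monotonic}(3), which states that for the least simultaneous fixed point $\rho$ of all $\Lift(\cdot,v)$-operators, $\rho(v)=\top$ holds iff player $\E$ wins the parity game $\PG$ starting from $v$. Since Algorithm~\ref{alg:blackbox3} returns $S_\top$, and by the correctness analysis (Proposition~\ref{lem:bb3:correctness} and the definition $\rho(v)=\max\{r\in\W\mid v\in S_{\succeq r}\}$) the set $S_\top$ is exactly $\{v\mid \rho(v)=\top\}$, the returned set is precisely the winning set $W_\E$ of player $\E$; its complement is $W_\O$. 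This establishes that the algorithm computes the winning set within the stated bounds.

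\smallskip\noindent I do not anticipate a genuine obstacle here, since the theorem is essentially a composition of two already-established results and the proof is a one-line implication (``Theorem~\ref{thm:bb3} together with Lemma~\ref{lem:lift_monotonic} imply\ldots''). The only point requiring mild care is making sure the interface matches exactly: that the $\top$ returned by the algorithm coincides with the $\top$ whose preimage under $\rho$ characterizes $W_\E$ in Lemma~\ref{lem:lift_monotonic}(3), and that the $\lift$ monotonicity in Lemma~\ref{lem:lift_monotonic}(1) is monotonicity in the \emph{first} argument as Theorem~\ref{thm:bb3} demands (both hold by the cited pages of~\cite{FearnleyJS0W17}). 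Any subtlety about whether the OPM really fits the black-box framework has already been absorbed into the three-part Lemma~\ref{lem:lift_monotonic}, so no further work beyond citing it is needed.
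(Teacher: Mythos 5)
Your proposal is correct and follows exactly the paper's route: the paper proves Theorem~\ref{thm:ordered_main} as the one-line composition of Theorem~\ref{thm:bb3} with Lemma~\ref{lem:lift_monotonic}, which is precisely the instantiation-plus-translation argument you spell out. Your version merely makes explicit the interface checks (totality of the order, monotonicity of $\lift$, and the $\rho(v)=\top$ characterization of the winning set) that the paper leaves implicit.
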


\begin{remarkb}{(Bounds for $|\W|$).}\label{rem:bound}
	We now discuss the bounds on $|\W|$. The breakthrough result of~\cite{calude2017stoc} 
	shows that $|\W|$ is quasi-polynomial ($n^{O(\log \numprio)}$) in general and polynomial
	when $\numprio \leq \log n$. Using the refined analysis of~\cite{FearnleyJS0W17}, 
	we obtain the following bound on $|\W|$: 
	in general, $\min(n \cdot \log(n)^{\numprio-1}, h \cdot
	n^{c_{1.45}+\log_2(h)})$, where $c_{1.45} = \log_2(e) < 1.45$ and $h=\lceil 1+\numprio /\log(n) \rceil$;
	and if $\numprio \leq \log n$, then $|\W|$ is polynomial due
	to~\cite[Theorem 2.8]{calude2017stoc} and \cite[Corollary 8.8]{FearnleyJS0W17}.
	Note that  $O\left(n^{2.45+\log_2(\numprio )}\right)$ gives a naive upper bound on $|\W|$ in general. 
	Plugging the bounds in Theorem~\ref{thm:ordered_main} we obtain a set-based symbolic 
	algorithm that requires quasi-polynomially many \os and basic set operations and $O(n)$ symbolic space.
	The algorithm requires only polynomially many \os and basic set operations when $\numprio  \leq \log n$.
\end{remarkb}

\section{Reducing the Number of Sets for the OPM}\label{sec:reducesets}

In this section, we tailor a data structure for the OPM  in order to only use
$O(\numprio \cdot \log n)$ sets. While each progress measure can be encoded by $\log(|\W|)$
many sets, the challenge is to provide a representation that also allows to efficiently compute the sets $S_{\succeq r}$.
Such a representation has been provided for the small progress measure~\cite{chatterjee2017symbolic} 
and in the following we adapt their techniques for the OPM.

\smallskip\noindent\emph{Key Idea.}
The key idea of the symbolic space reduction is that we \emph{encode} the
value of each coordinate of the rank $r$ separately. A set no longer just
stores the vertices with specific rank $r=b_1 \dots b_k$ but instead stores
all vertices where, say, the first coordinate $b_1$ is equal to a specific value in
$C\_$. This encoding enables us to use only a polylogarithmic amount of symbolic space
under the assumption that the number of priorities in the game graph is
polylogarithmic in the number of vertices.

\smallskip\noindent\emph{Symbolic Space Reduction.}
Let the rank of $v$ be $r=b_{1}\dots b_k$.
Vertex $v$ is in the set $C_x^i$ iff the $i$th coordinate of the rank of $v$ is $x$
and a vertex $v$ is in the set $C_\top$ iff the rank of $v$ is $\top$. 
Thus $O(\log(n) \cdot \numprio)$ sets suffice to encode all $r \in
\W$. We demonstrate this encoding of the sets in
Example~\ref{ex:setex}.

\begin{example}\label{ex:setex}
	Let $\PG$ be a parity game containing the vertices $v_1,v_2,v_3$. Assume the
	following ranking function:
	$
		f(v_1)  = 65433, f(v_2) = 75422, f(v_3) =
		\mathunderscore\mathunderscore\mathunderscore 32.
	$
	Using the definition of our encoding, we have that:
	$
		\{v_3\} \subseteq C_{\_}^1, \{v_1\} \subseteq C_6^1,\{v_2\} \subseteq C_7^1, 
		\{v_3\} \subseteq C_{\_}^2,	\{v_1,v_2\} \subseteq C_5^2,\{v_3\} \subseteq C_{\_}^2, 
		\{v_1,v_2\} \subseteq C_4^3,\{v_2\} \subseteq C_2^4, \{v_1,v_3\} \subseteq C_3^4, \{v_1\} \subseteq C_3^5,\{v_2, v_3\} \subseteq C_2^5
	$.
\end{example}

\smallskip\noindent\emph{Computing the set $S_{\succeq r}$ from $C_x^i$.}
We obtain the set $S_r$ for rank  $r=b_{1}\dots b_k$ with an intersection of the sets 
$\bigcap_{i=1}^k C_{b_i}^i = S_r$.
To acquire the set $S_{\succeq r}$ we first consider
sets where the first $i$ elements are the equal to $b_1, \dots b_i$
but the $i\mathtt{+}1$th element $x$ is $\succ b_{i+1}$.	
\begin{equation}\label{eq:constructsgreaterr}
  S^i_{\succeq r} =\bigcap_{1 \leq j \leq i} C_{b_j}^j \cap \bigcup_{x\succ b_{i+1}} C_{x}^{i+1}
\end{equation}

To \emph{construct} the set $S_{\succeq r}$ we apply the following union
operations:
\begin{equation}\label{eq:constructsets}
	S_{\succeq r} = \bigcup_{i = 1}^{k-1} S^i_{\succeq r} \cup S_r \cup C_{\top}
\end{equation}

That is, we can compute the set  $S_{\succeq r}$ with $O(\numprio \cdot \log n)$ set operations and four additional sets.
Notice that there is no need to store all sets $S^i_{\succeq r}$ as we can immediately add them to the final set when we have computed them.
The number of $\cup$-operations is immediately bounded by $O(\numprio \cdot k)=O(\numprio \cdot \log n)$ by the above definitions.
In order to bound the number of $\cap$-operations by $O(\log n)$, we do the following.
To compute the sets $S^i_{\succeq r}$  we introduce an additional set 
$T^{i} = \bigcap_{1 \leq j \leq i} C_{b_j}^j$.
We have that $S^i_{\succeq r} = T^i \cap \bigcup_{x\succ b_{i+1}} C_{x}^{i+1}$
and $T^{i+1}= T^{i} \cap C_{b_{i+1}}^{i+1}$, i.e., we just need two $\cap$ operation to compute the next set $S^{i+1}_{\succeq r}$.
Moreover, we have that $S_r=T^k$ and thus can be computed with just one $\cap$ operation.
In total, this amounts to $2k-2 = O(\log n)$ many  $\cap$-operations.

\smallskip\noindent\emph{Updating the set $S_{\succeq r}$ to $S'_{\succeq
		r}$.}\label{par:updatesets}
Assume that the set $S_{\succeq r}$ is the old set that is saved within the sets
$C^i_x$. The new set, $S'_{\succeq r}$ is an updated set, which is also a
superset.
First, compute the difference $S_\Delta = S'_{\succeq r} \setminus S_{\succeq r}$.
Intuitively, the algorithm increased the rank of the vertices in $S_\Delta$.  We delete their old values by updating 
$C^i_{x} = C^i_{x} \setminus S_\Delta$ \ for all $i = 0\dots k$ and each $x \in
\{0,\dots \numprio-1\}$. Then we add the vertices to the set  $C^i_{r_i} = C^i_{r_i}
\cup S_\Delta$ for all $i = 0\dots k$.
In total there are $O(\numprio \log n)$ many $\setminus$-operations and $O(k) =
O(\log n)$ many $\cup$-operations.

Using the above techniques for constructing and updating the sets
Algorithm~\ref{alg:blackbox3} can be modified to obtain the following theorem.
We present the details in Section~\ref{app:detailsSetAlg}.

\begin{theorem}\label{thm:bb4}\label{thm:reducedspace_main}
	The winning set of a parity game can be computed in  $O(n \cdot |\W|)$ \os operations,
	$O(\numprio^2 n \cdot |\W| \cdot \log n)$ basic set operations, and $O(\numprio \cdot \log n)$ symbolic space. 
\end{theorem}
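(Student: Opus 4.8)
The plan is to derive Theorem~\ref{thm:reducedspace_main} as a direct consequence of Theorem~\ref{thm:ordered_main} by replacing the explicit storage of the sets $S_{\succeq r}$ in Algorithm~\ref{alg:blackbox3} with the coordinate-wise encoding through the sets $C_x^i$ and $C_\top$ developed above. Correctness needs no new argument: by Lemma~\ref{lem:lift_monotonic} the OPM order $(\W,\preceq)$ and the function $\lift$ satisfy the hypotheses of Algorithm~\ref{alg:blackbox3}, and the encoding is merely a different representation of the same sets $S_{\succeq r}$. I would check that each operation of Data Structure~\ref{alg:bb3:ds} --- in particular $D.getSet(r)$, $D.getNext(r)$, $D.getPrevious(r)$ and $D.update(r,S)$ --- can be realized on the encoded representation using Equations~\eqref{eq:constructsgreaterr} and~\eqref{eq:constructsets} for construction and the $\setminus$/$\cup$ routine for updates, so that the algorithm still computes the ranking function $\rho$. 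Then Lemma~\ref{lem:lift_monotonic}(3) yields that the returned set $S_\top = C_\top$ is exactly the winning set of player $\E$.

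For the symbolic one-step operations I would observe that the encoding changes only how sets are stored and retrieved, not the control flow of the algorithm. The sole one-step operation is the call $\CP_z(S_{\succeq r})$ at Line~\ref{alg:bb3:cp}, executed once per activation; hence the amortized count of Proposition~\ref{prop:numberactivations} carries over verbatim and gives $O(n \cdot |\W|)$ one-step operations. For the basic set operations the key point is that every access to a set $S_{\succeq r}$ in the original analysis now incurs an $O(\numprio \log n)$ overhead: constructing $S_{\succeq r}$ from the encoding costs $O(\numprio \log n)$ union and $O(\log n)$ intersection operations (using the auxiliary sets $T^i$), while updating $S_{\succeq r}$ to a superset costs $O(\numprio \log n)$ difference and $O(\log n)$ union operations. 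Since Proposition~\ref{prop:bb3:setoperations} bounds the number of such accesses by $O(\numprio \cdot n \cdot |\W|)$, multiplying by the per-access overhead yields $O(\numprio^2 \cdot n \cdot |\W| \cdot \log n)$ basic set operations.

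The symbolic space bound then follows by counting the sets that are genuinely stored as symbolic objects: there are $(\numprio\mathtt{+}1)$ choices of $x \in C\_$ and $k=O(\log n)$ choices of coordinate $i$, giving $O(\numprio \log n)$ sets $C_x^i$, plus $C_\top$ and a constant number of scratch sets ($S_\Delta$, the running $T^i$, and the partially assembled $S_{\succeq r}$). Importantly, the ranks $r \in \W$ tracked by $D$ and the active list are ordinary (non-symbolic) data and do not count toward the symbolic space, consistent with Remark~\ref{remark1}; the total is therefore $O(\numprio \log n)$.

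The main obstacle I anticipate lies in the comparison-based operations of $D$: detecting $S_{\succeq r} \supset S_{\succeq r'}$ to implement $getNext$, $getPrevious$ and the maintenance of the active list, once the sets are never materialized simultaneously. Concretely, I must show that reconstructing each of the two sets to be compared costs only $O(\numprio \log n)$ basic operations, that the amortized charging scheme of Proposition~\ref{prop:bb3:setoperations} (charging each $r' \in \W$ at most $n$ times) remains valid when the unit-cost set operations are replaced by their encoded $O(\numprio \log n)$-cost counterparts, and that no step of the inner while-loop ever requires more than a constant number of reconstructed sets in memory at once --- which is precisely what keeps the symbolic space at $O(\numprio \log n)$ rather than reverting to the $O(n)$ bound of Theorem~\ref{thm:bb3}. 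The full verification of these data-structure details is deferred to Section~\ref{app:detailsSetAlg}.
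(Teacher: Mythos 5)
Your proposal follows essentially the same route as the paper's own proof: it replaces the explicit sets of Algorithm~\ref{alg:blackbox3} by the coordinate-wise sets $C_x^i$, reduces correctness to showing that $D.getSet$ and $D.update$ realize the same semantics on the encoded representation (the content of Proposition~\ref{prop:bb4:correctness}), and obtains the operation counts by multiplying the $O(\numprio \cdot n \cdot |\W|)$ set accesses of Proposition~\ref{prop:bb3:setoperations} by the $O(\numprio \log n)$ per-access encoding overhead, exactly as in Proposition~\ref{prop:bb4:runningtime}. The obstacles you flag (reconstructing sets for the comparisons in $getNext$/$getPrevious$ and keeping only constantly many reconstructed sets live) are precisely the data-structure details the paper resolves in Section~\ref{app:detailsSetAlg}, so your plan is sound and matches the paper.
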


\begin{remarkb}\label{remark:polyspace}
	Note that Theorem~\ref{thm:bb4} achieves bounds similar to Theorem~\ref{thm:bb3}
	with a factor $\numprio \cdot \log n$ increase in basic set operations,
	however, the symbolic space requirement decreases from $O(n)$ to $O(\numprio \cdot \log n)$. 
	In particular, using the bounds as mentioned in Remark~\ref{rem:bound}, we obtain 
	a set-based symbolic algorithm that requires quasi-polynomially many \os and
	basic set operations, and $O(\numprio \cdot \log n)$ symbolic space,
	and moreover, when $\numprio \leq \log n$, then the algorithm requires polynomially
	many \os and basic set operations and 
	only poly-logarithmic $O(\log^2 n)$ symbolic space. 
\end{remarkb}

\begin{remarkb}\label{remark2}
	Recall that our AVL-tree data structure potentially requires $O(n)$ non-symbolic space 
	(cf.~Remark~\ref{remark1}) in the worst case. 
	The algorithm above reduces the symbolic space requirement to $O(\numprio \cdot \log n)$.
	We briefly outline how to reduce the non-symbolic space to the same bound.
	The main purpose of our AVL-tree data structure is to (a) avoid storing all sets explicitly and
	(b) efficiently maintain pointers to the active sets. 
	As we now have a succinct representation of all sets we are only left with (b).
	For (b), we additionally maintain $d\cdot \log n +1$ sets in a data structure $O$ to determine if the set
	$S_{\succeq r}$ was changed since the last time we chose $r$ at the start of
	the while-loop which processes an active $r$ in each iteration. 
	In each iteration of the outer while loop the chosen active
	set $S_{\succeq r}$ of $D$ is copied into the corresponding set $T_{\succeq r}$ in $O$.
	We can then identify sets $S_{\succeq r}$ which changed since the last iteration where $r$ was chosen by testing
	$(S_{\succeq r} \neq T_{\succeq r})$. 
	To obtain an active element, it remains to additionally check whether there is no $r' >
	r$ where $S_{\succeq r'}  = S_{\succeq r}$. If $S_r \neq \emptyset$ this
	property is true and we can return an active set. Because we need to
	possibly go through $\W$ in each iteration, the number of basic set operations
	is increased by a factor of $\W$. 
	In total, the number of basic set operations are increased 
	to $O(\numprio \cdot n \cdot |\W|^2 \log n)$ while
	the other symbolic resource consumption stays the same.
\end{remarkb}

\subsection{Proof of Theorem~\ref{thm:reducedspace_main}}\label{app:detailsSetAlg}

\begin{algorithm}[ht]
	\SetKwInOut{Input}{input}
	\SetKwInOut{Output}{output}
	\SetAlgoVlined
	\Input{Parity Game $\PG$}
	\caption{OPM Algorithm with Reduced Symbolic Space}\label{alg:bb4}
	Initialize $C^i_{\_} \gets  V$ for $0 \leq i \leq k$\; \label{alg:bb4Linit1}
	Initialize $C^i_{c} \gets \emptyset$ for $0 \leq i \leq k$, $c \in C$\; \label{alg:bb4Linit2}
	D.activate$(\min)$\;
	\While{$r \gets D.popActiveSet()$}{\label{alg:bb4:while1}
		$S_{\succeq r}  \gets D.getSet(r)$\;
		D.deactivate($r$)\;
		$P \gets \CP_z(S_{\succeq r})$\;\label{alg:bb4:getCP}
		\For{$c \in C$} {\label{alg:bb4:for1}
			$r' \gets \lift(r,c)$\;
			$S_{\succeq r'}  \gets D.getSet(r')\;$\;\label{alg:bb4:getSet1}
			$rold \gets r'$\;\label{alg:bb4:roldv}
			$S_{rold} \gets S_{\succeq r'} \cup (P \cap
			V_c)$\label{alg:bb4:rolds}\;
			\While{$P \cap V_c \not\subseteq S_{\succeq
					r'}$}{\label{alg:bb4:while2}
				$S_{\succeq r'} \gets S_{\succeq r'} \cup (P \cap V_c)$\;
				$S_{\succeq next(r')} \gets D.getSet(D.next(r'))\;$\;
				\If{$r' = \top $ or $S_{\succeq r'} \supset S_{\succeq next(r')}$}
				{
					$D.activate(r', S_{\succeq r'})$\;
				}
				$S_{\succeq prev(r')} \gets D.getSet(D.getPrevious(r'))$\;
				\If{$S_{\succeq r'} = S_{\succeq prev(r')} \cup (P \cap V_c)$}{
					$D.removeSet(D.getPrevious(r'))$\;
				}				
				$r'\gets D.getPrevious(r')$\;
				$S_{\succeq r'}  \gets D.getSet(r')\;$\;
			}   
			$D.update(rold, S_{rold})\;$\;
			\label{alg:bb4:update}
		}
	}
	\Return $S_\top$
\end{algorithm}

\smallskip\noindent\emph{Data Structure.}\label{alg:bb4:ds}
The new data structure $D$ supports all functions of Data
Structure~\ref{alg:bb3:ds} but the nodes of the AVL tree no longer store a
pointer to a set (but only the value $r$).
The data structure $D$ stores $\numprio \cdot \log n + 1 $ sets as described in Section~\ref{sec:reducesets}, and 
overrides the $D.update(r,S)$-operation.

\begin{itemize}
	\item D.update(r,S): Updates the set $S_{\succeq r}$ to be $S$.
		Every set $S_{\succeq r'}$ with $r' \prec r$  is updated to
		$S_{\succeq r'}\cup S$.
\end{itemize}

\smallskip\noindent\emph{Implementation of the data structure $D$.}
\begin{itemize}
	\item D.getSet(r): This function computes the set $S_{\succeq r}$ as
		described in Section~\ref{sec:reducesets} and returns it.
	\item D.update(r,S): Computes the update of the set $S_{\succeq r}$ with $S$ as
		described in Section~\ref{sec:reducesets}.
		Precondition: $S_{\succeq r} \subseteq S$.

\end{itemize}

Notice that Algorithm~\ref{alg:bb4} only differs from Algorithm~\ref{alg:blackbox3} in 
(a) the way the  \texttt{D.getSet(r)} method is implemented and 
(b) how the sets are updated, i.e., the overridden \texttt{D.update(r,S)} method.
Hence, to establish the correctness of Algorithm~\ref{alg:bb4}, it suffices to show that these two methods
do exactly the same as the corresponding operations in Algorithm~\ref{alg:blackbox3}.
The correctness then directly follows from Proposition~\ref{lem:bb3:correctness}.

\begin{proposition}[Correctness]\label{prop:bb4:correctness}
	Given a parity game $\PG$ Algorithm~\ref{alg:bb4}
	computes the winning set.
\end{proposition}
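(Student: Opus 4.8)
The plan is to prove Proposition~\ref{prop:bb4:correctness} by reduction to the already-established correctness of Algorithm~\ref{alg:blackbox3} (Proposition~\ref{lem:bb3:correctness}). As noted in the text immediately preceding the statement, Algorithm~\ref{alg:bb4} differs from Algorithm~\ref{alg:blackbox3} only in two respects: the implementation of \texttt{D.getSet(r)}, which now reconstructs $S_{\succeq r}$ from the coordinate-wise sets $C^i_x$ rather than reading a stored pointer, and the overridden \texttt{D.update(r,S)} operation. Thus it suffices to show that these two operations produce exactly the same input/output behavior as their counterparts in Data Structure~\ref{alg:bb3:ds}. If both operations are behaviorally equivalent, then the two algorithms generate identical sequences of values for every $S_{\succeq r}$ throughout execution, and in particular return the same final set $S_\top$, whence correctness transfers verbatim.

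First I would verify \texttt{D.getSet(r)}. The claim is that the encoding invariant ``$v \in C^i_x$ iff the $i$th coordinate of $\rho(v)$ equals $x$, and $v \in C_\top$ iff $\rho(v) = \top$'' is maintained, and that under this invariant the reconstruction formula~\eqref{eq:constructsets} returns exactly the set $\{v \mid \rho(v) \succeq r\}$. This is a purely combinatorial fact about the lexicographic order on tuples: a tuple $\rho(v) = b'_1 \dots b'_k$ satisfies $\rho(v) \succeq r = b_1 \dots b_k$ if and only if either $\rho(v) = \top$ (captured by $C_\top$), or $\rho(v) = r$ (captured by $S_r = T^k$), or there is a first coordinate $i+1$ at which they differ with $b'_{i+1} \succ b_{i+1}$ while agreeing on the prefix $b_1\dots b_i$ (captured by the $S^i_{\succeq r}$ terms). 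I would make this explicit by a short case analysis on the position of the first differing coordinate, establishing that the disjoint union over these cases is precisely $S_{\succeq r}$.

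Next I would check the overridden \texttt{D.update(r,S)}. The original operation updates $S_{\succeq r}$ to $S$ and propagates this to all $r' \prec r$ for which $S_{\succeq r'} = S_{\succeq r}$ beforehand; under the anti-monotonicity invariant (Invariant~\ref{inv:correctness}(1)) and the precondition $S_{\succeq r} \subseteq S$, this is equivalent to setting $S_{\succeq r'} \gets S_{\succeq r'} \cup S$ for all $r' \preceq r$. I would confirm that the coordinate-wise update---computing $S_\Delta = S \setminus S_{\succeq r}$, removing $S_\Delta$ from every $C^i_x$, and re-inserting it into $C^i_{r_i}$---reassigns exactly the vertices of $S_\Delta$ to have rank $r$ in the encoding, and leaves all other vertices' coordinates untouched. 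One must check that this reassignment, read back through \texttt{D.getSet}, yields the intended updated family of sets, i.e.\ that the encoding invariant is re-established after the update. This again follows because setting $v$'s coordinates to $r_1\dots r_k$ is exactly what raising $\rho(v)$ to $r$ means in the encoding.

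The main obstacle I anticipate is the interaction between the new \texttt{D.update} semantics and the restructured update loop in Algorithm~\ref{alg:bb4}, which buffers \texttt{rold} and $S_{rold}$ (Lines~\ref{alg:bb4:roldv}--\ref{alg:bb4:rolds}) and applies a single \texttt{D.update} at the end of the for-iteration (Line~\ref{alg:bb4:update}) rather than updating in place as the inner while-loop descends. I would need to argue that, because the coordinate encoding does not permit the cheap pointer-sharing that the AVL tree exploited, deferring the write and letting \texttt{D.update} handle the downward propagation to all $r' \prec rold$ reproduces exactly the cumulative effect of the in-place additions performed by Algorithm~\ref{alg:blackbox3}'s inner loop. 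Verifying that the buffered $S_{rold}$ captures the correct set and that the propagation semantics of the overridden \texttt{D.update} match the iterated additions is where the argument requires genuine care; once this equivalence is secured, Proposition~\ref{lem:bb3:correctness} and Lemma~\ref{lem:lift_monotonic}(3) together deliver that the returned set $S_\top$ is exactly the winning set of player $\E$.
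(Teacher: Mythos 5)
Your proposal follows essentially the same route as the paper: reduce correctness to Proposition~\ref{lem:bb3:correctness} by showing, via induction over the iterations, that the reimplemented \texttt{D.getSet(r)} and \texttt{D.update(r,S)} reproduce exactly the sets maintained by Algorithm~\ref{alg:blackbox3}, using the lexicographic decomposition of $S_{\succeq r}$ into the terms $S^i_{\succeq r}$, $S_r$, and $C_\top$. The paper's own argument is the same in substance (if somewhat terser about the deferred \texttt{rold} write that you rightly flag as the delicate point), so no further comparison is needed.
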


\begin{proof}
	We show that the \texttt{D.getSet(r)} method  (in the interplay with the \texttt{D.update(r,S)} method)
	in Algorithm~\ref{alg:bb4} returns the same sets as the \texttt{D.getSet(r)} method in 
	Algorithm~\ref{alg:blackbox3}. The correctness then directly follows from Proposition~\ref{lem:bb3:correctness}.

	The proof is by induction. Consider the base case after the initialization of the data structure $D$
	and its sets $C_c^i$.
	In Algorithm~\ref{alg:blackbox3} we have that $D.getSet(\min)$ would return $V$
	and $D.getSet(r)$ would return the empty set for $\min \prec r$.
	In Algorithm~\ref{alg:bb4}, by the initialization in line \ref{alg:bb4Linit1},
	we have that $D.getSet(\min)$ would return $V$ and 
	by the initialization in line \ref{alg:bb4Linit2} we have $D.getSet(r)$ would return the empty set for $\min \prec r$.
	Therefore, the base case is satisfied.

	Notice that the data structures both in Algorithm~\ref{alg:blackbox3} and
	Algorithm~\ref{alg:bb4} are only changed in the for-loop. 
	Assume the claim holds before an iteration of the for-loop. Let $\bar{r}$ be
	the element of $\W$ currently processed by the outer while-loop,
	and let $\bar{r}'$ the $r'$ currently processed by the for-loop.
	Consider some $r \in \W$. 
	By induction hypothesis, $D.getSet(r)$ coincides in both algorithms beforehand.
	If $r \succ \bar{r}'$ then $D.getSet(r)$ is not affected by the changes in both algorithms and 
	thus $D.getSet(r)$ coincides in both algorithms after the iteration of the for loop.
	If $r \preceq \bar{r}'$ then Algorithm~\ref{alg:blackbox3} updates the data structure such
	that $P \cup V_c$ is added to the set $S_{\succeq r}$ (the set returned by $D.getSet(r)$).
	Now consider Algorithm~\ref{alg:bb4}. Here the algorithm adds the set $P \cup
	V_c$ to the set $C_x^i$
	that correspond to $\bar{r}'$.
	As $r \preceq \bar{r}'$ there is an $i\geq 0$ such that $r$ and $\bar{r}'$ coincide on the first $i$ elements 
	and the set $P \cup V_c$ is then contained in the set $S^i_{\succeq r}$.
	That means that the set returned by $D.getSet(r)$ contains the set $P \cup V_c$. 
	Moreover, as only vertices in $P \cup V_c$ are affected by the update, all the vertices that were previously contained in
	$D.getSet(r)$ are still contained in the set. In other words, Algorithm~\ref{alg:bb4} adds $P \cup V_c$  to the set $S_{\succeq r}$.
	That is, the two $D.getSet(r)$ methods coincide also after the iteration of the for-loop for all $r \in \W$.
	Thus, we have that $D.getSet(r)$ coincides in the two algorithms and the correctness of Algorithm~\ref{alg:blackbox3}
	extends to Algorithm~\ref{alg:bb4}.
\end{proof}

\begin{proposition}[Symbolic operations]\label{prop:bb4:runningtime}
	Algorithm~\ref{alg:bb4} uses
	$O(\numprio \log n)$ sets with $O(n |\W|)$ symbolic one-step operations and
	$O( \numprio^2 n |\W| \log n)$ basic
	set operations.
\end{proposition}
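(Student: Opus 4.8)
The plan is to establish the bound on symbolic space and on the two types of symbolic operations separately, leveraging as much as possible the analysis already carried out for Algorithm~\ref{alg:blackbox3}. For the \textbf{symbolic space}, I would simply recall that the data structure now stores the progress measure via the coordinate-encoded sets $C_x^i$: there are $k = O(\log n)$ coordinates and each coordinate ranges over $C\_ = C \cup \{\_\}$, giving at most $(\numprio+1)\cdot k = O(\numprio \log n)$ sets of the form $C_x^i$, plus the special set $C_\top$ and the constant number of auxiliary sets ($T^i$, $S_\Delta$, $P$, and the partial unions $S^i_{\succeq r}$) used transiently when computing or updating $S_{\succeq r}$. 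Since the auxiliary sets are constant in number, the total symbolic space is $O(\numprio \log n)$, as claimed.

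For the \textbf{symbolic one-step operations}, the key observation is that Algorithm~\ref{alg:bb4} performs a $\CP_z$ operation at exactly the same place and under exactly the same conditions as Algorithm~\ref{alg:blackbox3} (Line~\ref{alg:bb4:getCP}), since by Proposition~\ref{prop:bb4:correctness} the two algorithms step through the identical sequence of active elements $r$ with identical sets $S_{\succeq r}$. Thus the count of one-step operations is unchanged from Proposition~\ref{prop:numberactivations}, namely $O(n \cdot |\W|)$; I would state this directly rather than re-prove the activation-counting argument. The one subtlety I would flag is that constructing the set $S_{\succeq r}$ from the $C_x^i$ (Equations~\eqref{eq:constructsgreaterr} and~\eqref{eq:constructsets}) uses only basic set operations ($\cap$, $\cup$, $\setminus$) and no $\Pre$, so the encoding does not introduce any additional one-step operations.

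For the \textbf{basic set operations}, I would start from the $O(n \cdot |\W|)$ iterations of the outer while-loop established in Proposition~\ref{prop:numberactivations}. The difference from Algorithm~\ref{alg:blackbox3} is that each abstract access to a set $S_{\succeq r}$ — whether a \texttt{getSet} or an \texttt{update} — is no longer a unit-cost operation but now costs $O(\numprio \log n)$ basic set operations, by the analysis accompanying Equations~\eqref{eq:constructsgreaterr}--\eqref{eq:constructsets} (the construction needs $O(\numprio \log n)$ unions and $O(\log n)$ intersections, and the update needs $O(\numprio \log n)$ set-differences). Since Algorithm~\ref{alg:blackbox3} already used $O(\numprio \cdot n \cdot |\W|)$ basic set operations, and each such operation is inflated by the $O(\numprio \log n)$ factor coming from the encoded representation, the resulting bound is $O(\numprio \cdot n \cdot |\W|) \cdot O(\numprio \log n) = O(\numprio^2 \, n \, |\W| \, \log n)$.

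The step I expect to require the most care is the amortized bookkeeping on the inner while-loop, mirroring the charging argument in Proposition~\ref{prop:bb3:setoperations}: one must confirm that the per-iteration cost of the inner loop, which in Algorithm~\ref{alg:blackbox3} was a constant number of unit set operations charged to some $r' \in \W$, now becomes a constant number of \emph{encoded} set operations each costing $O(\numprio \log n)$, and that each $r'$ is still charged at most $n$ times (because new vertices are added to $S_{\succeq r'}$ monotonically). Putting these together gives $O(\numprio \cdot n \cdot |\W|)$ encoded operations times $O(\numprio \log n)$ per operation, again yielding $O(\numprio^2 \, n \, |\W| \, \log n)$; I would make sure the construction cost in the while-loop guard test and the final \texttt{D.update(rold, $S_{rold}$)} at Line~\ref{alg:bb4:update} are both accounted for under the same charging scheme.
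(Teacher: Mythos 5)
Your proposal is correct and follows essentially the same route as the paper's proof: the one-step count is inherited unchanged from Proposition~\ref{prop:numberactivations}, and the basic-set-operation bound is obtained by multiplying the $O(\numprio\, n\, |\W|)$ encoded accesses (with the inner while-loop handled by the same amortized charging as in Proposition~\ref{prop:bb3:setoperations}) by the $O(\numprio \log n)$ cost per \texttt{getSet}/\texttt{update}. Your explicit count of the $C_x^i$ sets for the space bound is a welcome addition that the paper defers to the surrounding text.
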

\begin{proof}
	There are $O(n \cdot |\W|)$ iterations if the while-loop at Line~\ref{alg:bb4:while1} (cf.~Proposition~\ref{prop:numberactivations}). 
	Therefore, the number of \os operations is $O(n \cdot |\W|)$.
	In each iteration of the while-loop, the
	for-loop at Line~\ref{alg:bb4:for1} has $\numprio$ iterations. 
	The basic set operations at Line~\ref{alg:bb4:getSet1},
	Line~\ref{alg:bb4:while2}, Line~\ref{alg:bb4:rolds} and at
	Line~\ref{alg:bb4:update} occur $O(\numprio \cdot n \cdot
	|\W|)$ times. This sums up to a total of $O(\numprio^2 \cdot n \cdot |\W| \log n)$
	basic set operations (as each \texttt{getSet(r)} requires $O(\numprio \log n)$
	basic set operations). 
	By the same amortized argument as in the proof of
	Proposition~\ref{prop:numberactivations},
	we obtain that the total number of basic set-operations in executions of the inner while-loop is in 
	$O(n |\W| \numprio \log n)$.
	The number of basic set operations is, thus, in $O(\numprio^2 n |\W| \log
	n)+ O(\numprio n |\W| \log n) = O(\numprio^2  n \cdot |\W|  \log n)$. 
\end{proof}

\noindent Due to Proposition~\ref{prop:bb4:correctness} and
Proposition~\ref{prop:bb4:runningtime} and the fact that we use only $O(\numprio \log
n)$ symbolic space in the modified data structure $D$, we obtain Theorem~\ref{thm:bb4}.

\section{Conclusion}

In this work, we present improved set-based symbolic algorithms for parity games.
There are several interesting directions for future work. 
On the practical side, implementations and experiments with case studies,
especially for the algorithm presented in Section~\ref{sec:bb3} instantiated with either the ordered approach or the succinct progress measure,
is an interesting direction.
On the theoretical side, recent work~\cite{ChatterjeeDHL18} has established lower bounds for 
symbolic algorithms for graphs, and whether lower bounds can be established for 
symbolic algorithms for parity games is another interesting direction for future work.
\bibliography{mpg}
\end{document}